\documentclass[nonacm]{acmart}

\AtBeginDocument{%
  \providecommand\BibTeX{{%
    \normalfont B\kern-0.5em{\scshape i\kern-0.25em b}\kern-0.8em\TeX}}}

\setcopyright{acmcopyright}
\copyrightyear{2021}
\acmYear{2021}
\acmDOI{}

\acmConference{}
\acmBooktitle{}
\acmPrice{}
\acmISBN{}
\usepackage{enumitem}
\usepackage{subcaption}

\begin{document}

\title{Harmless but Useful: Beyond Separable Equality Constraints in Datalog+/-}

 \author{Luigi Bellomarini}
 \affiliation{
   \institution{Banca d'Italia}
 }

 \author{Emanuel Sallinger}
 \affiliation{
   \institution{TU Wien \& University of Oxford}
}

\renewcommand{\shortauthors}{L. Bellomarini, E. Sallinger}

 \begin{abstract}
 Ontological query answering is the problem of answering queries in the presence of schema constraints representing the domain of interest.
 Datalog+/- is a commonly adopted family of languages for schema constraints, which includes tuple-generating dependencies (TGDs) and equality-generating dependencies (EGDs). Unfortunately, the interplay of TGDs and EGDs leads to undecidability or intractability of query answering when adding EGDs to tractable Datalog+/- fragments, like Warded Datalog+/-, for which, in the sole presence of TGDs, query answering is PTIME in data complexity. There have been attempts to limit the interaction of TGDs and EGDs and guarantee tractability, in particular with the introduction of separable EGDs, whose idea is making EGDs irrelevant for query answering as long as the set of TGD constraints is satisfied. While being tractable, separable EGDs have limited expressive power.
 
 In this paper, we propose a more general class of EGDs, which we call ``harmless'', that subsume separable EGDs and allow to model and reason about a much broader class of problems. Unlike separable EGDs, harmless EGDs do affect query answering in the sense that, besides enforcing ground equality constraints, they specialize the query answer by grounding or renaming the labelled nulls introduced by existential quantification in the TGDs. Harmless EGDs capture the cases when the answer obtained in the presence of EGDs is strictly less general than ---or an image of--- the one, if any, obtained with TGDs only.
 We study the theoretical problem of deciding whether a set of constraints contains harmless EGDs and conclude it is undecidable. Nevertheless, we contribute a sufficient syntactic condition characterizing harmless EGDs, which is broad and useful in practice. 
 We focus on the Warded Datalog+/- fragment and argue that, in such language, query answering keeps decidable and PTIME in data complexity in the presence of harmless EGDs. 
 We study principled chase-based techniques for query answering in Warded Datalog+/- with harmless EGDs, conducive to an efficient algorithm to be implemented in state-of-the-art reasoners.
 \end{abstract}

\maketitle

\section{Introduction}
\label{sec:introduction}

Given a query $Q$, a database $D$ and a set of schema constraints $\Sigma$,  \textit{ontological query answering} refers to the problem of computing all the possible sets of facts $B$ that satisfy $Q$ such that $B \supseteq D$ and $B$ satisfies $\Sigma$. In other words, not only must the query be computed with respect to $D$, but it should also include all the facts that are entailed from $D$ via $\Sigma$.

Datalog$^\pm$~\cite{CaGL09} is a general class of languages for schema constraints, which extends Datalog~\cite{datalog2}: a set of Datalog$^\pm$ rules is a set of \textit{function-free Horn clauses},  potentially including \textit{existential quantification}, i.e., tuple-generating dependencies (TGDs), negative constraints with the falsum ($\bot$) in rule head, stratified negation, and \textit{functionality constraints} expressed as \textit{equality-generating dependencies} (EGDs). 

\medskip
A TGD is a first-order implication of the form $\forall \mathbf{x}~\boldsymbol{\phi}(\mathbf{x}) \to \exists\mathbf{z}~\boldsymbol{\psi}(\mathbf{y},\mathbf{z})$, where $\boldsymbol{\phi}(\mathbf{x})$ and $\boldsymbol{\psi}(\mathbf{y},\mathbf{z})$ are conjunctions of atoms over a relational schema. An EGD is a first-order implication of the form 
$\forall \mathbf{x}~\boldsymbol{\phi}(\mathbf{x}) \to x_i=x_j$, where $\boldsymbol{\phi}(\mathbf{x})$ is a conjunction of atoms over a relational schema and $x_i,x_j \in \mathbf{x}$.
When performing ontological query answering, via the so-called \textsc{chase} procedure~\cite{FKMP05}, we expand $D$ with facts entailed via the application of rules $\Sigma$ and introduce placeholder variables known as \textit{labelled nulls} or \textit{marked nulls} to denote the unknown values generated by existential quantification. Finally, we evaluate $Q$ upon these facts. 
EGDs affect query answering by enforcing equalities, which can result in either comparisons between constants, satisfied or causing the chase failure, or renaming of labelled nulls.

If $\Sigma$ is a set of TGDs, in the presence of recursion and existential quantification, ontological query answering is in general undecidable~\cite{GoPi15}. Thus, Datalog$^\pm$ fragments adopt syntactic restrictions to guarantee decidability and tractability, like in the fragment denoted by the eponymous \textit{Warded Datalog$^\pm$}~\cite{GoPi15} language, where only a limited propagation of labelled nulls is allowed so as to keep query answering decidable and \textsf{PTIME} in data complexity. Unfortunately, the interplay of TGDs and EGDs, when there is no limitation in how the two can interact, causes undecidability of query answering even in elementary cases~\cite{ChVa85}. This paper proposes a language for EGDs that allows high interaction with TGDs while guaranteeing decidability and tractability in \textit{Warded Datalog$^\pm$}, thus a broad fragment of Datalog$^\pm$.

\medskip
\noindent Let us consider the following set of rules from an assembly line domain (where we omit universal quantifiers to simplify):
\begin{example}
\label{ex:intro_example}
\begin{align*}
\textit{component}(x) \to \exists z~\textit{component}(z), \textit{partOf}(x,z) \tag{$\sigma_1$} \\
\textit{partOf}(x,v), \textit{partOf}(x,w) \to v=w \tag{$\eta_1$} \\
\textit{component}(x),\textit{component}(y), \textit{tag}(x,y), \textit{partOf}(x,v), \textit{partOf}(y,w) \to v=w \tag{$\eta_2$}
\end{align*}

\textit{Components are parts of other components, as denoted by $\sigma_1$.
EGD $\eta_1$ asserts that every component can be part of one single component and, by $\eta_2$, two components sharing a tag are part of the same component.~$\blacksquare$}
\end{example}
Now, let us take 
$D=\{\mathit{component}(\mathit{engine}),
\mathit{component}(\mathit{piston}), \mathit{component}(\mathit{camshaft}), \mathit{component}(\mathit{lobe}),$\\
$\mathit{component}(\mathit{thrust}), 
\mathit{partOf}(\textit{piston},\textit{engine}),
\mathit{partOf}(\textit{lobe},\textit{camshaft}),
\mathit{tag}(\textit{piston},\textit{camshaft}),
\mathit{tag}(\textit{lobe},\textit{thrust})\}$
and the gro\-und Boolean conjunctive query $Q$ defined as 
$q \leftarrow \mathit{partOf}(\mathit{thrust},\mathit{camshaft}),\mathit{partOf}(\mathit{camshaft},\mathit{engine})$ 
where $q$ is a propositional atom satisfied if the thrust is part of the engine via the camshaft. Now, we want to answer $Q$ by expanding $D$ with $\Sigma$. By $\sigma_1$ we derive 
$\mathit{partOf}(\textit{camshaft},\nu_1)$, $\mathit{partOf}(\textit{thrust},\nu_2)$,$\mathit{partOf}(\nu_2,\nu_3)$, along with the corresponding facts $\textit{component}(\nu_1)$, $\textit{component}(\nu_2)$, $\textit{component}(\nu_3),$ where $\nu_1, \nu_2, \nu_3$ are labelled nulls generated by existentials. By $\eta_2$ we can conclude $v_1 = \mathit{engine}$ as the camshaft and the piston must belong to the same component (the engine), and $\nu_2=\textit{camshaft}$, since also the lobe and the thrust are bound to be part of the same component (the camshaft). This is already enough for a positive answer to $Q$ and, by $\eta_1$, we can also confirm that $\nu_3= \mathit{engine}$.

\medskip
Observe that the solution to $Q$ obtained by only applying TGD $\sigma_1$ can be mapped onto the solution obtained also considering $\eta_1$ and $\eta_2$ by a substitution of variables $h=\{\nu_1\to\mathit{engine}, \nu_2\to\textit{camshaft}, \nu_3\to \mathit{engine}\}$. For a set of ``\textit{harmless}'' EGDs respecting specific syntactic conditions, like $\eta_1$ and $\eta_2$, this property holds for every $Q$ that admits an answer. Also note that in the absence of the EGDs, $\eta_2$ in particular, a positive answer to $Q$ would not be possible. This witnesses that harmless EGDs allow for an expressive form of interaction with TGDs, more general than the existing fragment of \textit{separable} EGDs~\cite{CaCF13}, which encode a lack of interaction, hence a separation, between TGDs and EGDs. 

\medskip\noindent\textbf{Contribution}. This paper presents the following contribution.

\begin{enumerate}[leftmargin=2mm]\setlength\itemsep{1em}
\item We introduce \textbf{the class of harmless EGDs}, which subsume separable EGDs and allow to model and reason upon a broader class of problems. In particular, we show that every separable EGD is also harmless (and not vice versa).
We prove that the problem of establishing whether a set of EGDs in $\Sigma$ is harmless is undecidable. We contribute a \textbf{sufficient syntactic condition}, namely \textit{safe taintedness}, which witnesses harmless EGDs in many practical cases.
It goes without saying that EGDs have extreme relevance in the data exchange~\cite{FKPT05} and reasoning literature~\cite{CaGL12}. Interestingly,
they are used in recent extensive benchmark results for state-of-the-art reasoners such as Vadalog~\cite{BeSG18}, as well as industrial and synthetic scenarios including ChaseBench~\cite{BKMM17}, a benchmark targeting data exchange and query answering problems, and iBench~\cite{AGCM15}, a metadata generator for data exchange settings. 
EGDs also play a key role in a large set of industrial scenarios~\cite{BFGK18}. Let us emphasise that we could verify that more than 95\% of EGDs adopted in the mentioned settings do respect our definition of harmless and are also captured by our syntactic condition.
\item We deal with \textbf{query answering with harmless EGDs} in Warded Datalog$^\pm$ and prove the problem is decidable and \textsf{PTIME} in data complexity.
Beyond the purely theoretical results, we face the challenge of providing practical algorithms for conjunctive query answering with harmless EGDs. To this aim, we move from the observations made for Warded Datalog$^\pm$~\cite{BeSG18} (\textit{warded semantics}), which allows to define a form of chase where isomorphic portions of the chase can be considered only once, keeping the chase finite and of limited size, and propose the \textit{relaxed warded semantics}, which underpins a \textbf{new variant of the chase for harmless EGDs} that extends the techniques used for Warded Datalog$^\pm$. Our chase exploits the specific form of interaction between TGDs and harmless EGDs to perform query answering. 
\item Towards a prototype implementation of harmless EGDs in the \textsc{Vadalog} system, a state-of-the-art Warded Datalog$^\pm$ reasoner~\cite{BeSG18}, we provide a \textbf{practical algorithm} based on the theoretical underpinnings of harmless EGDs, which implements our new variant of the chase.
\end{enumerate}

\medskip\noindent\textbf{Overview}. The remainder of the paper is organized as follows. In Section~\ref{sec:preliminaries} we recall the preliminary notions. In Section~\ref{sec:harmless} we introduce harmless EGDs, showing their relationship with the separable ones and the related work. In Section~\ref{sec:harmless_harmful} we develop the theory behind harmless EGDs, showing the decidability results and the syntactic conditions. Section~\ref{sec:efficient_evaluation} deals with query answering with Datalog$^\pm$ and harmless EGDs and lays the basis for the implementations by providing a practical algorithm. In Section~\ref{sec:conclusion} we draw up our conclusions.

\section{Preliminaries}
\label{sec:preliminaries}
Let us start by laying out the preliminary notions.

\medskip\noindent\textbf{Relational foundations}.
Let $\mathbf{C}$, $\mathbf{N}$, and $\mathbf{V}$ be disjoint countably infinite sets of {\em constants}, {\em (labelled) nulls} and {\em variables}, respectively.  A {\em term} is a either a constant or variable. Different constants represent different values (\textit{unique name assumption}). A {\em (relational) schema} $\mathbf{S}$ is a finite set of relation symbols with associated arity. Given a schema $\mathbf{S}$, an {\em atom} is an expression of the form $R(\bar v)$, where $R \in \mathbf{S}$ is of arity $n > 0$ and $\bar v$ is an $n$-tuple of terms. A {\em database instance} (or simply \textit{database}) over $\mathbf{S}$ associates to each relation symbol in $\mathbf{S}$ a relation of the respective arity over the domain of constants and nulls.
The members of relations are called \textit{tuples} or \textit{facts}. Sometimes, with some abuse of notation, we will use the terms tuple, fact and atom interchangeably. 
A \textit{substitution} from one set of symbols $S_1$ to another set of symbols $S_2$ is a function $h : S_1 \to S_2$, defined as a possibly empty set of assignments $\{X\to Y\}$ (and we also write $Y=h(X)$ and $S_2=h(S_1)$ with intuitive meaning), where $X \in S_1$ and $Y \in S_2$ and each symbol of $S_1$ can be mapped at most to one symbol of $S_2$. 
Given two sets of atoms $A_1$ and $A_2$, we define a \textit{homomorphism} from $A_1$ to $A_2$, a substitution 
$h : \mathbf{C}\cup\mathbf{N}\cup\mathbf{V} \to \mathbf{C}\cup{N}\cup\mathbf{V}$ such that $h(t)=t$, if $t \in \mathbf{C}$, and for each atom $a(t_1,\ldots,t_n)\in A_1$, we have that $h(a(t_1,\ldots,t_n))=a(h(t_1),\ldots,h(t_n))$ is in $A_2$.  
We define a homomorphism from $A_1$ to $A_2$ as \textit{partial} if it maps a subset $B$ of $A_1$ (possibly $A_1$ itself) to $A_2$, i.e., $h(B)$ coincides with $A_2$, with $B\subseteq A_1$. 
We say that a homomorphism $h$ maps $A_1$ onto $A_2$,
whenever $h(A_1)$ coincides with $A_2$ (i.e., $h$ is surjective). In the rest of the paper we will in general refer to homomorphisms with this property unless otherwise specified. The definition of homomorphism 
easily extends to conjunctions of atoms. 
We will also adopt the function composition formalism $h^\prime\circ h$ to denote the orderly application of the homomorphisms or substitutions $h$ and $h^\prime$ (i.e. $h^\prime(h(\ldots))$. Two sets of atoms $A_1$ and $A_2$ are \textit{isomorphic} if there exists a homomorphism $h$ such that $A_1$ can be mapped onto $A_2$ and vice versa.

\medskip\noindent\textbf{Conjunctive queries}.
A \textit{conjunctive query} (CQ) $Q$ over a schema $\mathbf{S}$ is an implication of the form $q(\mathbf{x}) \leftarrow \boldsymbol{\phi}(\mathbf{x},\mathbf{y})$, where $\boldsymbol{\phi}(\mathbf{x},\mathbf{y})$ is a conjunction of atoms over $\mathbf{S}$, $q(\mathbf{x})$ is an atom that does not occur in $\mathbf{S}$, and $\mathbf{x}$ and $\mathbf{y}$ are vectors of terms. If $q$ is of arity zero, then $Q$ is a \textit{Boolean conjunctive query} (BCQ).
An answer to a conjunctive query $Q$ : $q(\mathbf{x}) \leftarrow \boldsymbol{\phi}(\mathbf{x},\mathbf{y})$ of arity $n$, over a database $D$, is the set $Q(D)$ of all the facts $t$, for which there exists a homomorphism $h: \mathbf{x}\cup\mathbf{y} \to \mathbf{C}\cup\mathbf{N}$ such that $h(\boldsymbol{\phi}(\mathbf{x},\mathbf{y})) \subseteq D$ and $t = h(\mathbf{x})$. In BCQs, q is a propositional atom, and we write $D\models Q$ if $Q(D)$ is not empty.

\medskip\noindent\textbf{Dependencies}. 
Datalog\(^\pm\) is a family of languages for schema constraints that extends Datalog with existential quantification and other minor features such as negative constraints with falsum ($\bot$) in the head and stratified negation. A set of Datalog$^\pm$ rules is a set of tuple-generating dependencies (TGDs). A TGD is a first-order implication of the form $\forall \mathbf{x}~\boldsymbol{\phi}(\mathbf{x}) \to \exists\mathbf{z}~\boldsymbol{\psi}(\mathbf{y},\mathbf{z})$, where $\boldsymbol{\phi}(\mathbf{x})$ (the \textit{body}) and $\boldsymbol{\psi}(\mathbf{y},\mathbf{z})$ (the \textit{head}) are conjunctions of atoms over a relational schema and boldface variables denote vectors of variables, with $\mathbf{x}\subseteq\mathbf{y}$. For brevity, we write these existential rules as $\boldsymbol{\phi}(\mathbf{x}) \to \exists~\mathbf{z}~\boldsymbol{\psi}(\mathbf{y},\mathbf{z})$, using commas to denote conjunction of atoms in $\boldsymbol{\phi}(\mathbf{x})$. A TGD $\sigma$ is satisfied by a database $D$ of schema $\mathbf{S}$ (and we write $D \models \sigma$) if whenever there is a homomorphism $h$ such that $h(\boldsymbol{\phi}(\mathbf{x})) \subseteq D$, there exists an \textit{extension} $h^\prime$ of $h$ (i.e., $h \subseteq h^\prime$) such that $h(\boldsymbol{\psi}(\mathbf{y},\mathbf{z})) \subseteq D$.
An \textit{equality-generating dependency} (EGD) is a first-order implication of the form $\forall \mathbf{x}~\boldsymbol{\phi}(\mathbf{x}) \to x_i=x_j$, where $\boldsymbol{\phi}(\mathbf{x})$ is a conjunction of atoms over a relational schema and $x_i,x_j \in \mathbf{x}$. We will omit universal quantifiers also in this case. A database $D$ of schema $\mathbf{S}$ satisfies an EGD $\eta$ if whenever there is a homomorphism $h$ such that $h(\boldsymbol{\phi}(\mathbf{x})) \subseteq D$, then we have that $h(x_i)=h(x_j)$.

\medskip\noindent\textbf{Ontological query answering}. The definition of query answering is usually extended to account for schema constraints. Given a database $D$ of schema $\mathbf{S}$ and a set $\Sigma$ of TGDs and EGDs over $\mathbf{S}$, we name the \textit{models} of $D$ and $\Sigma$ as the set of all databases $B$ (and we write $B \models D \cup \Sigma$) such that $B \supseteq D$, and $B \models \Sigma$. The answer to a CQ $Q$ over $D$ under $\Sigma$ is the set of facts $t$ such that $t \in Q(B)$, where $B \models D \cup \Sigma$. Also in this case, a positive answer to a BCQ ($D\cup\Sigma \models Q$) corresponds to a non-empty set of terms $t \in Q(B)$.
Query answering under general TGDs is undecidable even when $Q$ is fixed and $D$ is given as input~\cite{CaGK13}. So, different restrictions to the language of TGDs have been introduced and are reflected in the fragments of Datalog$^\pm$. Warded Datalog$^\pm$ is one of them, with a very good trade-off between expressive power and computational complexity, with CQ in fact being \textsf{PTIME}~\cite{BeSG18}. In the presence of EGDs, query answering under TGDs is also undecidable, even in very simple cases where EGDs are used to define key constraints~\cite{ChVa85}. Datalog$^\pm$ fragments allow only very limited forms of interaction between TGDs and EGDs, such as \textit{separable EGDs}~\cite{CaCF13}, which do not hamper tractability and decidability, as we discuss in detail in Section~\ref{sec:harmless}. In the rest of the paper we will refer to BCQs without loss of generality. In fact, the problems of CQ answering and BCQ answering under TGDs and EGDs are \textsf{LOGSPACE}-equivalent as the decision version of CQ answering and BCQ answering are mutually \textsf{AC$_0$}-reducible~\cite{CaGK13}.

\medskip\noindent\textbf{The Chase}. Chase-based procedures~\cite{MaMS79} modify a database $D$ by adding facts to it, until it satisfies a set of constraints $\Sigma$. Intuitively, in the context of ontological query answering, given a BCQ Q, the chase expands $D$ with facts inferred by applying $\Sigma$ to $D$ into a database $\textit{chase}(\Sigma,D)$ (which we will refer to as ``the chase'' as well, abusing terminology), possibly containing labelled nulls and such that $\textit{chase}(\Sigma,D) \models Q$. In particular a chase execution $\textit{chase}(\Sigma,D)$ builds a \textit{universal model} for $D$ and $\Sigma$, i.e., for every database $B$ that is a model for $D$ and $\Sigma$, there is a homomorphism mapping $\textit{chase}(\Sigma,D)$ onto $B$.
The chase fails if constraints in $\Sigma$ cannot be satisfied by expanding $D$. Many forms of the chase exists, and we will focus on specific variants when needed in the discussion. Let us recall for the moment two general working rules, the \textit{TGD chase step} and the \textit{EGD chase step}. Given a database $D$ over a schema $\mathbf{S}$, and a TGD $\sigma: \boldsymbol{\phi}(\mathbf{x}) \to \exists\mathbf{z}~\boldsymbol{\psi}(\mathbf{y},\mathbf{z})$, we have that $\sigma$ is applicable if there exists a homomorphism $h$ such that $h(\boldsymbol{\phi}(\mathbf{x})) \subseteq D$. Then, the TGD chase step, adds the fact $h^\prime(\boldsymbol{\psi}(\mathbf{y},\mathbf{z}))$ to $D$, if not already in $D$, where $h^\prime \supseteq h$ is a homomorphism that extends $h$ and maps variables of $\mathbf{z}$ to newly created labelled nulls. In the same setting, given an EGD $\eta: \boldsymbol{\phi}(\mathbf{x}) \to x_i=x_j$, we have that $\eta$ is applicable if there exists a homomorphism $h$ such that $h(\boldsymbol{\phi}(\mathbf{x})) \subseteq D$ and $h(x_i)\neq h(x_j)$. Then, the EGD chase steps proceeds as follows: (i)~if both $x_i$ and $x_j$ are constants, it fails; (ii)~if $x_i$ (resp.\ $x_j$) is a variable, the chase steps replaces each occurrence of $x_i$ (resp.\ $x_j$) with $x_j$ (resp.\ $x_i$) in $D$.

With the definition of chase steps in place, we are ready to recall the chase procedure. Given a database $D$ and a set of constraints $\Sigma = \Sigma_T \cup \Sigma_E$ (TGDs and EGDs, respectively), the chase applies each applicable TGD steps (once) and all the EGD steps to fixpoint, i.e., as long as applicable or a hard constraint is violated. 
We can see the chase in action by considering again Example~\ref{ex:intro_example}. With a TGD chase step we generate $\textit{component}(\nu_1)$ and $\textit{partOf}(\textit{camshaft},\nu_1)$, where $\nu_1$ is a labelled null. We try to apply all EGDs to fixpoint and, in particular, $\eta_2$ assigns $\nu_1 = \textit{engine}$ because of the homomorphism $h = \{x\to \textit{piston}, y\to\textit{camshaft}, v\to\textit{engine},w\to\nu_1\}$ from the body of $\eta_2$ to $D$, which implies the assignment of $\nu_1$ to the value of $v$. We then proceed with another TGD step and obtain $\textit{component}(\nu_2)$ and $\textit{partOf}(\textit{thrust},\nu_2$); by applying again $\eta_2$ with EGD steps, we can assign $\nu_2 = \textit{camshaft}$, and so on.

\section{Adding EGDs to Reasoning in Datalog$^\pm$}
\label{sec:harmless}

Towards our proposal for a new class of more expressive EGDs, we first discuss the main limitation of the related approaches, separable EGDs in particular.

\subsection{Prior Work}
As the interaction between EGDs and TGDs leads to undecidability in query answering~\cite{ChVa85}, in the Datalog$^\pm$ context, there have been some attempts to introduce semantic restrictions to EGDs~\cite{CaGK08,CaGP10,CaPi11,CaGL12} to make them tractable. 
An initial intuition is that of \textit{innocuous EGDs}~\cite{CaGK08}. They enjoy the property that query answering is insensitive to them, provided that the chase does not fail. As a consequence, given a relational schema $\mathbf{S}$ and a set $\Sigma = \Sigma_T \cup \Sigma_E$ of TGDs and EGDs over $\mathbf{S}$, a chase application can simply ignore $\Sigma_E$, with the guarantee that all the facts needed for query answering will be entailed. This property, which is semantic and cannot be syntactically checked, is of scarce practical utility, as the adopted EGDs are superfluous and do not add expressive power to the TGDs.

A more interesting notion is that of \textit{separable EGDs}~\cite{CaCF13}. A set of constraints $\Sigma$ is separable if for every database $D$ for $\mathbf{S}$: (i) if the chase of $\Sigma$ over $D$ fails, then $D$ does not satisfy $\Sigma_E$; (ii) if the chase does not fail, then we have that $\text{chase}(\Sigma,D) \models Q$ iff $\text{chase}(\Sigma_T,D) \models Q$ for every BCQ $Q$ over $\mathbf{S}$. 
Essentially, separable EGDs do not interact with the TGDs. This concept was originally introduced in the context of inclusion dependencies and key dependencies~\cite{AbHV95}, and has been reformulated more recently also under the notion of \textit{EGD-stability}~\cite{CaCF13} according to which a set of TGDs and EGDs $\Sigma$ is EGD-stable if, for every instance $D$ of $\mathbf{S}$, if $D$ satisfies $\Sigma_E$, then the chase of $D$ under $\Sigma$ does not fail. 

While it has been proven that in general checking whether a set of rules $\Sigma$ is separable is undecidable, specific cases of separability can be syntactically checked. In particular, \textit{non-conflicting} sets of TGDs and EGDs, a definition given in the context of functional dependencies, have been shown to be separable~\cite{CaGP10}.

\subsection{Beyond Separability}
The main deficiency of separable EGDs is their \textit{limited unification power}. So, while syntactically, separable EGDs appear to equate also variables of intensional predicates, in fact they never cause distinct labelled nulls to be unified ---motivating why separable EGDs can be verified a priori against $D$--- and do not contribute to altering the TGD chase. To better grasp this limitation, let us consider the following two examples:

\begin{example}
\label{ex:first}
\begin{align*}
  \textit{element}(x) \rightarrow \exists z~\textit{comp}(x,z) \tag{$\sigma_1$} \\
  \textit{att}(x,k),\textit{att}(y,k),\textit{comp}(x,z^\prime),  \textit{comp}(y,z^{\prime\prime}) \to z^\prime=z^{\prime\prime} \tag{$\eta_1$}\\
\end{align*}
\textit{A basic clustering scenario: every element $x$ is a component of a set $z$ ($\sigma_1$) and whenever two elements $x$ and $y$ share an attribute $k$, then they belong to the same set ($\eta_1$).~$\blacksquare$}
\end{example}

\noindent In Example~\ref{ex:first}, the facts created by chasing $\sigma_1$ do trigger $\eta_1$ and labelled nulls, corresponding to different sets, can be unified by $\eta_1$, whenever the two sets are discovered to be the same. This scenario is already beyond the expressive power of separability. Let us now analyse a different formulation of the same problem.

\begin{example}
\label{ex:second}
\begin{align*}
  \textit{element}(x) \rightarrow \exists z~\textit{comp}(x,z) \tag{$\sigma_1$} \\
  \textit{att}(x,k),\textit{att}(y,k) \to \exists z~\textit{comp}(x,z),  \textit{comp}(y,z)\tag{$\sigma_2$}\\
  \textit{comp}(x,z^{\prime}), \textit{comp}(x,z^{\prime\prime}) \to z^{\prime} = z^{\prime\prime} \tag{$\eta_1$}\\
\end{align*}
\textit{Every element is in a set ($\sigma_1$) and, whenever $x$ and $y$ have the same value $k$ for their attribute, there exists a set $z$ containing both of them ($\sigma_2$). Moreover, every element is in exactly one set ($\eta_1$).~$\blacksquare$}
\end{example}

Examples~\ref{ex:first} and~\ref{ex:second} are semantically equivalent, but Example~\ref{ex:second} uses only a key constraint. It is easy to see that $\eta_1$ unifies the facts produced while chasing $\sigma_1$ and $\sigma_2$. Therefore, also in this case, there is no separability. Let us validate this intuition with an example. Given the database $D=\{\textit{att}(1,A),\textit{att}(2,A),\textit{att}(3,A),\textit{element}(1),\textit{element}(2),\textit{element}(3)\}$,
and $Q = \textit{comp}(1,z),\textit{comp}(2,z),\textit{comp}(3,z)$ we have that in both the formulations
$\textit{chase}(D,\Sigma) \models Q$ and
$\textit{chase}(D, \Sigma_T) \not\models Q$ and thus we conclude $\Sigma$ is not separable. In particular, for Example~\ref{ex:first} from $\textit{chase}(D, \Sigma_T)$, we obtain $\{\textit{comp}(1,z_1),\textit{comp}(2,z_1),$\\$\textit{comp}(3,z_3)\}$ and for Example~\ref{ex:second}, we obtain $\{\textit{comp}(1,z_1),\textit{comp}(2,z_1),\textit{comp}(2,z_2),\textit{comp}(3,z_2),
\textit{comp}(1,z_3),\textit{comp}(3,z_3)\}$ and neither of them satisfies $Q$. 
On the other hand, in both the examples, $\textit{chase}(D,\Sigma)$ unifies $z_1,z_2$, and $z_3$ therefore satisfying $Q$. Moreover, it goes without saying that our EGDs are not innocuous as well, as we have just shown how they affect the result of query answering.

\medskip
Despite being not separable, the interaction between EGDs and TGDs in Examples~\ref{ex:first} and~\ref{ex:second} is somehow peculiar and lends itself to a simplified evaluation:
we first generate all the facts from the TGDs and then unify the generated labelled nulls.
In both cases, although EGDs are triggered by facts generated by TGDs, they do not trigger, in turn, TGDs. They do not exhibit \textit{forward interference} with other rules and only apply a final unification of the obtained labelled nulls, producing a less general but more fitting representation of the domain of interest.
This suggests that EGDs in our examples are to some extent \textit{harmless}, in that they do not affect or interfere with the application of other rules.

\subsection{Harmless EGDs}
The developed considerations are conducive to a new class of EGDs, not separable and able to express more complex reasoning problems possibly without altering the computational complexity of query answering in broad fragments of Datalog$^\pm$.
Intuitively, the idea is to consider a set of EGDs $\Sigma_E$ to be \textit{harmless} w.r.t.\ a specific set of rules $\Sigma$, if $\textit{chase}(\Sigma_T,D)$ produces
a more general result than  $\textit{chase}(\Sigma,D)$ for every $D$. This means that the result of the chase on $\Sigma_T$ can be always mapped onto the result of the chase of $\Sigma$.

\begin{definition}
\label{def:harmless}
\textit{Given a set of TGDs and EGDs $\Sigma = \Sigma_T \cup \Sigma_E$ over a schema $\mathbf{S}$, we define $\Sigma_E$ as \emph{harmless} if for every database $D$ over $\mathbf{S}$, (i) if $\textit{chase}(D,\Sigma)$ fails, then $\textit{chase}(D,\Sigma_T)$ does not satisfy $\Sigma_E$; (ii) if $\textit{chase}(D,\Sigma)$ does not fail, there exists a homomorphism $h$ mapping $\textit{chase}(D,\Sigma_T)$ onto $\textit{chase}(D,\Sigma)$.}
\end{definition}

In other words, harmless EGDs individuate the cases where a universal model for $D$ and $\Sigma$ can be obtained as a special case (i.e., an image of) of a universal model for $D$ and $\Sigma_T$. This notion can be equivalently formulated in terms of query answering.

\begin{definition}
\label{def:harmless_and_qa}
\textit{Given a set of TGDs and EGDs $\Sigma = \Sigma_T \cup \Sigma_E$ over a schema $\mathbf{S}$, we define $\Sigma_E$ as harmless if for every database $D$ over $\mathbf{S}$, (i) if $\textit{chase}(D,\Sigma)$ fails, then $\textit{chase}(D,\Sigma_T)$ does not satisfy $\Sigma_E$; (ii) if $\textit{chase}(D,\Sigma)$ does not fail, for every BCQ $Q$ over $\mathbf{S}$ there exists a partial homomorphism $h$ mapping $\textit{chase}(T,\Sigma_T)$ to $\textit{chase}(T,\Sigma)$, such that $\textit{chase}(D,\Sigma) \models Q$ iff $h(\textit{chase}(D,\Sigma_T)) \models Q$.}
\end{definition}

The reason we give two definitions is that we often use the simpler Definition~\ref{def:harmless}, while the style of Definition~\ref{def:harmless_and_qa} is often used in related worked, and is convenient for comparison to separable EGDs, as we show next.

\begin{theorem}
\label{th:def_equivalence}
Definitions~\ref{def:harmless} and~\ref{def:harmless_and_qa} are equivalent. 
\begin{proof}
\textit{Definitions~\ref{def:harmless}(i) and Definition~\ref{def:harmless_and_qa}(i) syntactically coincide, so we can concentrate on proving the equivalence of points (ii). We first show that Definition~\ref{def:harmless} implies Definition~\ref{def:harmless_and_qa}, and then that Definition~\ref{def:harmless_and_qa} implies Definition~\ref{def:harmless}.}

\smallskip\noindent
\textit{(Def~\ref{def:harmless} $\Rightarrow$ Def~\ref{def:harmless_and_qa})} By Def~\ref{def:harmless}, as $\Sigma_E$ is harmless, there exists a homomorphism $h$ mapping $\textit{chase}(D,\Sigma_T)$ onto $\textit{chase}(D,\Sigma)$. To prove Def~\ref{def:harmless_and_qa}, let us separately show that for every BCQ $Q$, there exists a partial homomorphism $h^\prime$  mapping $\textit{chase}(T,\Sigma_T)$ to $\textit{chase}(T,\Sigma)$ such that $\textit{chase}(D,\Sigma) \models Q$ $\Rightarrow$ $h^\prime(\textit{chase}(D,\Sigma_T)) \models Q$ and $\textit{chase}(D,\Sigma) \models Q$ $\Leftarrow$ $h^\prime(\textit{chase}(D,\Sigma_T)) \models Q$.\\
\noindent ($\Rightarrow$) Let us choose $h^\prime = h$. By Def~\ref{def:harmless}, $\textit{chase}(D,\Sigma)$ coincides with $h^\prime(\textit{chase}(D,\Sigma_T))$. Then, since by Def~\ref{def:harmless_and_qa} we have that $\textit{chase}(D,\Sigma)\models Q$, it follows that $h^\prime(\textit{chase}(D,\Sigma_T))\models Q$.\\
\noindent ($\Leftarrow$) Let us choose $h^\prime = h$. By Def~\ref{def:harmless}, $\textit{chase}(D,\Sigma)$ coincides with $h^\prime(\textit{chase}(D,\Sigma_T))$. Then, since by Def~\ref{def:harmless_and_qa} we have that $h^\prime(\textit{chase}(D,\Sigma_T))\models Q$, it follows that $\textit{chase}(D,\Sigma)\models Q$.

\smallskip\noindent
\textit{(Def~\ref{def:harmless_and_qa} $\Rightarrow$ Def~\ref{def:harmless}) Let us construct a BCQ $Q$ as the join of all the facts of $\textit{chase}(D,\Sigma)$.
By Def~\ref{def:harmless_and_qa}, there exists a partial homomorphism $h$ mapping $\textit{chase}(D,\Sigma_T)$ to $\textit{chase}(D,\Sigma)$, such that $\textit{chase}(D,\Sigma)\models Q$ iff $h(\textit{chase}(D,\Sigma_T))\models Q$. To prove Def~\ref{def:harmless}, we need to show that, for the homomorphism $h$ for such a query $Q$, we have that  $\textit{chase}(D,\Sigma)$ and $h(\textit{chase}(D,\Sigma_T))$ coincide. To this end, we need to show that $h$ is surjective (it maps $\textit{chase}(D,\Sigma_T)$ onto $\textit{chase}(D,\Sigma)$) and total (it maps all the elements of $\textit{chase}(D,\Sigma_T)$). Let us proceed separately for the two properties.}

\smallskip\noindent\underline{Surjectivity}.
By Def~\ref{def:harmless_and_qa}(ii-$\Rightarrow$) and given that $\textit{chase}(D,\Sigma)\models Q$ by construction, it follows that $h(\textit{chase}(D,\Sigma_T))\models Q$.
Then for every atom $\boldsymbol\varphi(\mathbf{x}) \in Q$, it holds $\boldsymbol\varphi(\mathbf{x}) \in h(\textit{chase}(D,\Sigma_T))$. Therefore $h$ is surjective.

\smallskip\noindent\underline{Totality}. Let us proceed by contradiction and assume $h$ is partial. Let $Q^\prime$ contain the facts of $\textit{chase}(D,\Sigma_T)$ that are not mapped by $h$. By Def~\ref{def:harmless_and_qa}(ii), there exists a homomorphism $h^\prime$ such that $h^\prime(\textit{chase}(D,\Sigma_T))\models Q^\prime$ iff  $\textit{chase}(D,\Sigma)\models Q^\prime$. Now, as $\textit{chase}(D,\Sigma_T)\models Q^\prime$ by construction, by Def~\ref{def:harmless_and_qa}(ii-$\Leftarrow$), it follows that $\textit{chase}(D,\Sigma)\models Q^\prime$. Since we have proved $h$ is surjective, then it covers the facts of $Q^\prime$ in $\textit{chase}(D,\Sigma)$. The only way for $h$ to cover the facts of $Q^\prime$ in $\textit{chase}(D,\Sigma)$ is to map them from the ones of $Q^\prime$ in $\textit{chase}(D,\Sigma_T)$, as $h$ is functional. This leads a contradiction, since the facts of $Q^\prime$ have been excluded from the domain of $h$. Therefore $h$ cannot be partial and is then total.
\end{proof}
\end{theorem}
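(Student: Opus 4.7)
The plan is to split into the two implications, noting first that point (i) is literally identical in both definitions, so the entire burden lies on the equivalence of point (ii). For the direction Def~\ref{def:harmless} $\Rightarrow$ Def~\ref{def:harmless_and_qa}, I would reuse the surjective homomorphism $h$ handed to me by Def~\ref{def:harmless} as the witness partial homomorphism for every BCQ $Q$. Since $h(\textit{chase}(D,\Sigma_T))$ coincides with $\textit{chase}(D,\Sigma)$, the biconditional $\textit{chase}(D,\Sigma) \models Q \Leftrightarrow h(\textit{chase}(D,\Sigma_T)) \models Q$ is immediate: both sides evaluate $Q$ against the same set of atoms.

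For the converse direction, my approach is the standard canonical-query construction. I would take the distinguished BCQ $Q^\star$ obtained by \emph{freezing} $\textit{chase}(D,\Sigma)$: each labelled null becomes an existential variable (constants are kept as they are), and the body of $Q^\star$ is the conjunction of all atoms of $\textit{chase}(D,\Sigma)$. This BCQ is trivially satisfied by $\textit{chase}(D,\Sigma)$ itself via the identity match. Feeding $Q^\star$ into Def~\ref{def:harmless_and_qa} then yields a partial homomorphism $h$ for which $h(\textit{chase}(D,\Sigma_T)) \models Q^\star$, forcing every atom of $\textit{chase}(D,\Sigma)$ to appear in $h(\textit{chase}(D,\Sigma_T))$. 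Since homomorphisms by definition land in their stated target, the reverse containment is automatic, giving the surjectivity half of Def~\ref{def:harmless}.

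The main obstacle I anticipate is lifting this $h$ from partial to total, as Def~\ref{def:harmless} requires the homomorphism to be defined on all of $\textit{chase}(D,\Sigma_T)$. My plan is a contradiction argument: assume $h$ misses a nonempty subset $R \subseteq \textit{chase}(D,\Sigma_T)$, and let $Q'$ be the BCQ obtained by freezing $R$. Since $\textit{chase}(D,\Sigma_T) \models Q'$ by construction, applying Def~\ref{def:harmless_and_qa} to $Q'$ together with the surjectivity just established should force the image of $R$ to sit inside $\textit{chase}(D,\Sigma)$; because $h$ is a function and every atom in $\textit{chase}(D,\Sigma)$ already has a preimage in the domain of $h$, the atoms of $R$ cannot remain outside that domain without violating functionality. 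The delicate point — and what I expect to be the real technical step — is that Def~\ref{def:harmless_and_qa} only promises a potentially \emph{different} partial homomorphism per query, so I would need to argue either that a single canonical $h$ (e.g.\ the one obtained from $Q^\star$) can be reused throughout, or that several witness homomorphisms can be glued into one total map, exploiting that they all land in the common target $\textit{chase}(D,\Sigma)$ and must agree on the already-covered atoms.
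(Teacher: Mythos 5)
Your proposal follows essentially the same route as the paper's own proof: the forward direction reuses the surjective homomorphism $h$ as the per-query witness, and the converse direction builds the canonical BCQ from all facts of $\textit{chase}(D,\Sigma)$ to extract surjectivity, then establishes totality by contradiction on the unmapped facts using functionality of $h$. The subtlety you flag about Definition~\ref{def:harmless_and_qa} only guaranteeing a possibly different witness homomorphism per query is real, and the paper handles it exactly as you suggest --- fixing the $h$ obtained from the canonical query and invoking a fresh witness $h^\prime$ only to conclude $\textit{chase}(D,\Sigma)\models Q^\prime$ --- so your plan is sound and matches the paper's argument.
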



\medskip
Let us now come to the relationship between harmless EGDs and separable EGDs, first showing the theoretical result and then analysing its implications with an example.

\begin{theorem}
\label{th:separability_and_harmless}
If a set of TGDs and EGDs $\Sigma = \Sigma_T \cup \Sigma_E$ is separable then $\Sigma_E$ is harmless (and not vice versa).
\begin{proof} We first show that EGD separability implies harmlessness.
By definition of separable EGDs, if $\textit{chase}(D,\Sigma)$ fails, then $D$ violates $\Sigma_E$, from which we conclude $\textit{chase}(D,\Sigma_T)$ violates $\Sigma_E$, i.e., Definition~\ref{def:harmless_and_qa}(i). Then, by separability, for every BCQ $Q$, it holds $\textit{chase}(D,\Sigma) \models Q$ iff $\textit{chase}(D,\Sigma_T) \models Q$. It is sufficient to choose the identity homomorphism for $h$ to derive Definition~\ref{def:harmless_and_qa}(ii) and conclude $\Sigma_E$ is harmless. Example~\ref{ex:second} is sufficient to show the reverse does not hold as, in such case, $\Sigma$ is not separable, while $\Sigma_E$ is harmless.
\end{proof}
\end{theorem}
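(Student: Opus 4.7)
The plan is to establish each of the two defining conditions of harmlessness (in the BCQ-flavoured Definition~\ref{def:harmless_and_qa}, which is closer in style to separability) from the separability hypothesis, and then exhibit a concrete harmless-but-not-separable witness for the strictness of the inclusion. I will work from Definition~\ref{def:harmless_and_qa} rather than Definition~\ref{def:harmless}, because separability is itself phrased in terms of BCQs, so no reshuffling of quantifiers is required; Theorem~\ref{th:def_equivalence} then transports the conclusion to Definition~\ref{def:harmless}.

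For condition (i), I would assume $\textit{chase}(D,\Sigma)$ fails. By the first clause of separability, $D$ already violates $\Sigma_E$, i.e.\ there is an EGD $\eta : \boldsymbol{\phi}(\mathbf{x}) \to x_i = x_j$ and a homomorphism $h$ with $h(\boldsymbol{\phi}(\mathbf{x})) \subseteq D$ yet $h(x_i) \neq h(x_j)$. Since the TGD chase is monotone ($\textit{chase}(D,\Sigma_T) \supseteq D$) and does not rename terms appearing in $D$, the very same $h$ witnesses a violation of $\eta$ in $\textit{chase}(D,\Sigma_T)$. Hence $\textit{chase}(D,\Sigma_T) \not\models \Sigma_E$, which is exactly Definition~\ref{def:harmless_and_qa}(i).

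For condition (ii), I would assume $\textit{chase}(D,\Sigma)$ does not fail. Separability guarantees that for every BCQ $Q$ over $\mathbf{S}$, $\textit{chase}(D,\Sigma) \models Q \iff \textit{chase}(D,\Sigma_T) \models Q$. Thus it suffices to exhibit a partial homomorphism $h$ from $\textit{chase}(D,\Sigma_T)$ to $\textit{chase}(D,\Sigma)$ with $h(\textit{chase}(D,\Sigma_T)) \models Q \iff \textit{chase}(D,\Sigma_T) \models Q$; taking $h$ to be the identity does this, provided the identity actually lands inside $\textit{chase}(D,\Sigma)$. The hard step, and the one I would develop with most care, is justifying that choice: I plan to argue that when $\Sigma$ is separable and the chase does not fail, the EGDs in $\Sigma_E$ are already satisfied by $\textit{chase}(D,\Sigma_T)$ (otherwise some EGD step would merge nulls or rewrite constants, producing a BCQ distinguishing the two chases and contradicting the BCQ-equivalence clause of separability), so that $\textit{chase}(D,\Sigma_T)$ and $\textit{chase}(D,\Sigma)$ coincide and the identity is a legitimate homomorphism between them.

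Finally, for the ``and not vice versa'' part, I would simply point to Example~\ref{ex:second}: the database $D$ and query $Q$ displayed immediately after Example~\ref{ex:second} already witness that $\textit{chase}(\Sigma,D) \models Q$ while $\textit{chase}(\Sigma_T,D) \not\models Q$, so $\Sigma$ is not separable; on the other hand $\Sigma_E$ in that example is harmless because the key constraint $\eta_1$ only merges labelled nulls introduced by $\sigma_1$ and $\sigma_2$ without triggering further TGDs, so one obtains the required surjective homomorphism from $\textit{chase}(D,\Sigma_T)$ onto $\textit{chase}(D,\Sigma)$ by collapsing these nulls. The only technical subtlety I anticipate here is spelling out that collapse as an actual homomorphism satisfying Definition~\ref{def:harmless}, but once the identification of nulls induced by $\eta_1$ is read as a substitution, that is routine.
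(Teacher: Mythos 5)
Your proposal is correct and follows essentially the same route as the paper's proof: condition (i) from the first separability clause plus monotonicity of the TGD chase, condition (ii) by taking $h$ to be the identity and invoking the BCQ-equivalence clause of separability, and Example~\ref{ex:second} as the witness that the converse fails. The extra care you take in justifying that the identity actually maps into $\textit{chase}(D,\Sigma)$ is a welcome refinement of a step the paper leaves implicit, but it does not change the structure of the argument.
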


It is also interesting to note that unlike separable EGDs, harmless EGDs do not limit the possible causes of chase failure to inherent violations of $\Sigma_E$ by $D$. On the contrary, in harmless EGDs, a violation of $\Sigma_E$ may even originate from facts generated by the TGDs. The following example (borrowed from~\cite{CaGP12}) highlights this aspect.

\begin{example}
\label{ex:fifth}

\begin{align*}
  \textit{r}(x,y) \rightarrow \exists z~\textit{s}(x,z,z) \tag{$\sigma_1$} \\
    \textit{s}(x,y,z), \textit{s}(x,y^{\prime},z^{\prime})
        \to y = y^{\prime},
            z = z^{\prime} \tag{$\eta_1$}\\
\end{align*}
\textit{Consider the database $D=\{r(a,b),s(a,b,c)\}$. The fact $s(a,z_1,z_1)$ is generated and it eventually violates $\eta_1$, which first unifies $z_1$ with $b$ and then unifies $z_1$ with $c$, causing a failure. The set of rules is not separable, yet the EGD is harmless since for every $D$ such that $\textit{chase}(D,\{\sigma_1\})$ does not violate $\eta_1$, a homomorphism to $\textit{chase}(D,\Sigma)$ can be easily found.~$\blacksquare$}

\end{example}

In operational terms, while separability assumes EGDs are pre-validated against $D$ and never applied during the chase, harmless EGDs allow the application of arbitrary EGD chase steps during the chase and the generated facts can even trigger TGD chase steps. The intertwined activation of harmless EGDs is tolerated since they do not give rise to a more general solution than the one that would be obtained with the only TGDs. However, this suggests a different chase execution technique, in which we first apply all the TGD steps (and thus generate $\textit{chase}(D,\Sigma_T)$)  and then all the EGD steps (thus implicitly define the homomorphism to $\textit{chase}(D,\Sigma)$), as we shall see.

\subsection{Undecidability Result}
\label{sec:decidability}

Before analysing sufficient syntactic conditions, we conclude the section by studying the general problem of determining the harmlessness of a set of EGDs. It turns out this problem is undecidable. We prove this with a reduction from Boolean conjunctive query answering, with the following strategy (inspired by~\cite{CGOP12}): given a BCQ Q, a database $D$ and a set of TGDs and EGDs $\Sigma$, we combine $D\cup\Sigma$ with a set $\Sigma^\prime$ of TGDs and non-harmless EGDs in such a way that for every database $D^\prime$ the TGDs of $\Sigma^\prime$ are triggered during the construction of $\textit{chase}(D^\prime,\Sigma^\prime)$, as the sole consequence of the triggering of non-harmful EGDs, iff $\textit{chase}(D,\Sigma)\models Q$. Intuitively, the reduction shows that deciding whether a BCQ Q is a model for $D\cup\Sigma$ is the same as deciding on the harmlessness of the EGDs in $\Sigma^\prime$.

\begin{theorem}
\label{th:undecidability}
Given a set of TGDs and EGDs $\Sigma = \Sigma_T \cup \Sigma_E$, the problem of deciding whether $\Sigma_E$ is harmless is undecidable.
\end{theorem}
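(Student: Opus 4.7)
The plan is to reduce Boolean conjunctive query answering under TGDs, which is undecidable even for a fixed query~\cite{CaGK13}, to deciding harmlessness. Given an instance $(D, \Sigma, Q)$ of BCQ answering, I construct in polynomial time a set of constraints $\Sigma' = \Sigma'_T \cup \Sigma'_E$ over an extended schema such that $\Sigma'_E$ is harmless if and only if $\textit{chase}(D, \Sigma) \not\models Q$; undecidability of harmlessness then follows.

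The construction uses primed copies of the predicate symbols of $\Sigma$ together with a few fresh ones. I encode $(D, \Sigma)$ via a fresh unary predicate $\mathit{seed}$, a TGD $\mathit{seed}(x) \to f'$ for every ground fact $f \in D$, and primed copies of the TGDs of $\Sigma$, so that the chase on the witness database $D^* = \{\mathit{seed}(a)\}$ reconstructs $\textit{chase}(D, \Sigma)$ up to renaming. A trigger TGD $\boldsymbol\phi'_Q(\mathbf{y}) \to \mathit{trig}(c_0)$ derives a marker atom exactly when the encoded chase entails $Q$. The non-harmlessness gadget then consists of a TGD $\mathit{seed}(x) \to \exists z\, P(x, z)$ introducing an existential null $\nu_1$, an EGD $\mathit{trig}(u), P(x, z) \to z = c^*$ grounding $\nu_1$ to a distinguished constant $c^*$ whenever the trigger fires, and a TGD $P(x, c^*) \to \exists w\, R(w)$ whose body can match only after this EGD-induced unification. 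On $D^*$, if $Q$ is entailed the full chase derives a fresh null $\nu_R$ inside $R(\nu_R)$, while $\textit{chase}(D^*, \Sigma'_T)$ contains no $R$-atom, so no surjective homomorphism onto the full chase exists and condition~(ii) of Definition~\ref{def:harmless} fails; if $Q$ is not entailed, the trigger stays silent, the EGD is inert, and the two chases coincide up to renaming.

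The main obstacle is the universal quantification over databases in the harmlessness definition: I must argue that no adversarial database $D'$ can witness non-harmlessness when $Q$ is not entailed from $D$. The delicate point is ruling out spurious activations caused by $D'$ directly supplying $\mathit{trig}$-atoms, $P$-atoms, or facts matching $\boldsymbol\phi'_Q$ outside the intended simulation. Following the localization idea of~\cite{CGOP12}, I would guard each component of the gadget with $\mathit{seed}$-sourced atoms (and use fresh symbols for $\mathit{trig}$, $P$, $R$, $c^*$) so that any firing of the $R$-TGD on an arbitrary $D'$ forces $D'$ to contain, up to homomorphism, an image of $D$ on which $\Sigma$ entails $Q$. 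This tight coupling between the witness structure of non-harmlessness and genuine $Q$-entailment is what ultimately closes the equivalence in both directions.
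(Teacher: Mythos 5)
Your reduction is essentially the paper's: both encode $D$ into the rule set so that a single seed database reproduces $\textit{chase}(D,\Sigma)$ up to renaming, and both build a gadget in which an EGD grounds an existentially invented null so that a TGD whose body demands a specific value can then fire and produce a fact absent from the TGD-only chase, destroying the surjective homomorphism required by Definition~\ref{def:harmless}(ii). The paper guards its gadget TGDs with the query atom $r(c_1,c_2)$ itself and forces the match via a repeated variable $r^*(x,x)$ where you use a distinguished constant $c^*$; these are cosmetic differences (your trigger-predicate variant also handles non-atomic $Q$, which the paper avoids by restricting to atomic queries without loss of generality). One small repair is needed: your EGD $\mathit{trig}(u), P(x,z) \to z = c^*$ is not an EGD under the paper's definition, which requires both equated terms to be variables of the body; add an auxiliary body atom whose position holds $c^*$ and equate $z$ with that variable. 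Finally, the adversarial-database concern you raise in your last paragraph --- a $D'$ that directly supplies $\mathit{trig}$-atoms or $P$-atoms --- is genuine, and seed-guarding alone does not dispose of it, since guards are just body atoms the adversary can equally inject into $D'$; note, however, that the paper's own ($\Leftarrow$) direction is a one-line assertion that is silent on exactly the same point, so your argument is no less complete than the published one on this score.
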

\begin{proof}
We proceed by reduction from Boolean conjunctive query answering under arbitrary TGDs, known to be undecidable even for simple atomic queries~\cite{CaVa81}, to the co-problem of harmlessness. 
Let $Q: q \leftarrow r(c_1,c_2)$ be an arbitrary atomic query, $D$ a database  and $\Sigma$  a set of rules such that $r(c_1,c_2) \in \textit{chase}(D,\Sigma)$ and $r(c_1,c_2) \not\in D$. We construct a new set of rules $\Sigma^\prime$ starting from $\Sigma$ and adding to it: a rule $\top \to f$ for each fact $f \in D$ (where $\top$ is the constant \textit{true}) plus the following TGDs (over predicates $r^*$ and $s^*$, not appearing in $\Sigma$): $\sigma_1: r(c_1,c_2), r^*(x,x) \to s^*(x,x)$ and $\sigma_2: r(c_1,c_2),s^*(x,y)\to \exists z~r^*(x,z)$ and the EGD $\eta : s^*(x,y),r^*(x,z) \to x=z$. 
Now, we argue that $\textit{chase}(D,\Sigma) \models Q$ iff $\Sigma_E^\prime$ is not harmless. 

\medskip
\noindent($\Rightarrow$) Assume that $D\cup\Sigma \models Q$. We have to show that $\Sigma'_E$ is not harmless. Consider the database $D^\prime = \{s^*(1,2)\}$.
It is clear that $D^\prime \models \eta$ and $\textit{chase}(D,\Sigma^\prime)$ does not fail by construction. 
Given that $D\cup\Sigma \models Q$ holds, by assumption, we also know that $r(c_1,c_2) \in \textit{chase}(D,\Sigma)$. Hence $r(c_1,c_2) \in \textit{chase}(D^\prime,\Sigma)$.
Thus $\sigma_2$ can fire in $\textit{chase}(D^\prime,\Sigma^\prime)$, it activates $\eta$ that, in turn, activates $\sigma_1$ and so finally $s^*(1,1) \in \textit{chase}(D^\prime,\Sigma^\prime)$. 
Now let us consider $\textit{chase}(D^\prime,\Sigma^\prime \setminus \{\eta\})$. We have that $s^*(1,1) \not\in \textit{chase}(D^\prime,\Sigma^\prime \setminus \{\eta\})$ and no facts for $s^*$ are originally appearing in $\Sigma$. Hence there is no homomorphism $h$ from $\textit{chase}(D^\prime,\Sigma^\prime \setminus \{\eta\})$ to $\textit{chase}(D^\prime,\Sigma^\prime)$, and $\Sigma_E^\prime$ is therefore not harmless.

\medskip
\noindent($\Leftarrow$) If $\Sigma_E^\prime$ is not harmless, it is easy to verify that $r(c_1,c_2) \in \textit{chase}(D,\Sigma)$: in fact, if it were not the case, $\sigma_1$ and $\sigma_2$ could not be activated, contradicting our hypothesis.
\end{proof}

\section{Harmless vs Harmful EGDs}
\label{sec:harmless_harmful}

The essence of harmless EGDs lies in the fact they should not trigger specific TGDs that produce more general facts than can be otherwise generated. This observation is taken to the extreme in the undecidability proof of Theorem~\ref{th:undecidability}, where non-harmless EGDs are constructed by making them produce the only facts capable of activating certain TGDs and thus giving rise to facts otherwise not in the chase.
Yet, even if a given EGD $\eta$ has this harmful property, how can we be sure that there is some database $D$ for which it is actually activated?
Theorem~\ref{th:undecidability} witnesses the impossibility of having this answer. We face the problem from a different perspective and study syntactic conditions sufficient to guarantee that, independently of $D$, all the EGDs of $\Sigma$ do not exhibit a harmful behaviour. This is the goal of this section. 

\subsection{Harmful EGDs}
Before delving into the technical results, we present two examples where the EGDs, instead, exhibit a harmful behaviour. This will be important to understand what limitations should be posed by our syntactic condition.

\begin{example}
\label{ex:third}
\begin{align*}
  \textit{element}(x) \rightarrow \exists z~\textit{comp}(x,z) \tag{$\sigma_1$} \\
  \textit{rest}(x,y) \to \exists z~\textit{comp}(x,z),  \textit{comp}(y,z)\tag{$\sigma_2$}\\
  \textit{comp}(x,z^{\prime}), \textit{comp}(x,z^{\prime\prime}) \to z^{\prime} = z^{\prime\prime} \tag{$\eta_1$}\\
  \textit{comp}(x,z), \textit{comp}(y,z) \to \textit{siblings}(x,y) \tag{$\sigma_3$}\\
\end{align*}
\textit{Here, elements $x$ and $y$ are clustered together in a set $z$ if a binary relation \textit{rest} holds for the pair $\langle x,y \rangle$ ($\sigma_2$). Each element belongs to one set only ($\eta_1$). We then mark as siblings all the pairs of elements in the same set $z$ ($\sigma_3$). $\blacksquare$}
\end{example}

Given a database $D=\{\textit{element}(a),\textit{element}(b),\textit{element}(c),\textit{rest}(a,b),\textit{rest}(b,c)\}$, if we do not consider $\eta_1$, the only triggers for $\sigma_3$ would be the facts produced by $\sigma_2$, i.e., the pairs $\langle x,y \rangle$ for which \textit{rest} is in $D$. For each such pair, a new set $z$ would be created by $\sigma_2$, thus triggering the creation of \textit{sibling} facts by $\sigma_3$, i.e., $\textit{siblings}(a,b)$ and $\textit{siblings}(b,c)$.
What we wish to model here with $\eta_1$ is a form of transitivity of the \textit{rest} relation: if $\textit{rest}(a,b),\textit{rest}(b,c)$ holds, we would like $a,b,c$ to be all grouped within the same set. The EGD encodes this behaviour by unifying the sets of elements to which an element $x$ belongs. So actually, the EGD produces facts that act as triggers for $\sigma_3$, which yields the fact $\textit{siblings}(a,c)$ that could not be produced otherwise. A homomorphism as defined in Definition~\ref{def:harmless} is clearly impossible to find.

\begin{example}
\label{ex:fourth}
\begin{align*}
  \textit{a}(x,y,w) \rightarrow \exists z~\textit{b}(x,y,z) \tag{$\sigma_1$} \\
  \textit{a}(x,y,w) \rightarrow \exists z~\textit{b}(x,z,w) \tag{$\sigma_2$} \\
  \textit{b}(x,y^{\prime},w^{\prime}),
    \textit{b}(x,y^{\prime\prime},w^{\prime\prime})
        \to y^{\prime} = y^{\prime\prime},
            w^{\prime} = w^{\prime\prime} \tag{$\eta_1$}\\
  \textit{b}(x,y,z) \rightarrow \textit{f}(x,y,z) \tag{$\sigma_3$}\\
    \textit{b}(x,x,x) \rightarrow \textit{f}(x,x,x) \tag{$\sigma_4$}\\
\end{align*}
\textit{A data fusion case where the source $a$ is artificially split ($\sigma_1$ and $\sigma_2$) and then unified by $\eta_1$ and copied to $f$.~$\blacksquare$}
\end{example}

With reference to Example~\ref{ex:fourth}, consider $D=\{\textit{a}(1,1,1)\}$ and the BCQ $Q: q \leftarrow f(1,1,1)$. We have that $\textit{chase}(D,\Sigma) \models Q$, as it contains $f(1,1,1)$, generated by either $\sigma_3$ or $\sigma_4$. The chase $\textit{chase}(D,\Sigma_T)$ contains $f(1,1,z_1)$ and $f(1,z_2,1)$, generated by $\sigma_3$ and thus, with a homomorphism $h = \{z_1\to 1,~z_2\to 1\}$ from $\textit{chase}(D,\Sigma_T)$ to $\textit{chase}(D,\Sigma)$, we can satisfy $Q$. It easy to see that unless $D$ violates $\eta_1$ ($x$ is not a key), a homomorphism can be found for every $D$. Therefore $\Sigma_E$ is harmless. Nevertheless, $\eta_1$ exhibits a potentially harmful behaviour, since it generates the only triggering facts for $\sigma_4$. Fortunately, the facts generated by $\sigma_4$ are by construction less general than (i.e., an image of) those generated by $\sigma_3$. In fact, if we removed $\sigma_3$, then $\textit{chase}(D,\Sigma_T)$ would not contain any facts for $f$ and therefore no homomorphism $h$ would exist. In this case, $\Sigma_E$ would be harmful.
While we cannot check that potentially harmful EGDs like $\sigma_4$ are actually generating facts that could not be obtained otherwise, we would like a syntactic condition to frame out all these cases and to be sufficient to witness harmlessness.
Example~\ref{ex:fourth} inspires an interesting intuition concerning the relationship between labelled nulls and the variables unified in the EGDs. For harmlessness, we can tolerate that atom terms affected by an EGD appear in the body of a TGD (e.g., the second and the third term of $b$ appearing in the body of $\sigma_3$ and $\sigma_4$), yet those terms should not be determinant for the applicability of a TGD. It is the case of $\sigma_3$, whilst this does not happen in $\sigma_4$, where the values of affected terms are used to decide whether the TGD must fire.

\subsection{Safe Taintedness: A Syntactic Condition}
\label{sec:safe_taintedness}

Towards a formalization of a sufficient syntactic condition, we introduce a novel characterization of the rule terms that are directly or indirectly affected by EGDs. First we need to briefly recall some working definitions~\cite{GoPi15}.

\medskip\noindent\textbf{Working Definitions}. Let $p[i]$ denote the term appearing in the $i$-th position of atom $p$ and refer to it as \textit{position}. Let $\textit{head}(\sigma)$ and $\textit{body}(\sigma)$ be the head and the body of a rule $\sigma$, respectively; we also define $\textit{exist}(\sigma)$ the set of existentially quantified variables of $\sigma$.
Given a set of rules $\Sigma$, a position $p[i]$ is inductively defined as \textit{affected position} if: (i) for some TGD $\sigma \in \Sigma$ s.t.\ and some variable $v \in \textit{exist}(\sigma)$, $v$ appears in position $p[i]$;
    (ii) for some TGD $\sigma$ and some variable $v \in \textit{body}(\sigma) \cap \textit{head}(\sigma)$, $v$ appears only in affected positions in $\textit{body}(\sigma)$ (i.e., it is a \textit{harmful variable}) and in position $p[i]$ in $\textit{head}(\sigma)$ (i.e., it is a \textit{dangerous variable}).

\medskip
\noindent\textbf{Taintedness}. We are now ready to introduce the notions of \emph{tainted position} and \emph{tainted variable}.

\begin{definition}
\label{def:tainted}
\textit{Given a set of rules $\Sigma$, a position $p[i]$ is inductively defined as \textit{tainted position} if:}
(i) \textit{$p[i]$ is affected and $\Sigma$ contains an EGD $\eta = \boldsymbol{\phi}(\mathbf{x}) \to x_i = x_j$ such that $p[i]$ is the position of $x_i$ (resp.\ $x_j$) in $\textit{body}(\eta)$ and $x_i$ (resp.\ $x_j$) is harmful in $\eta$;} (ii) \textit{for some TGD $\sigma \in \Sigma$ and some variable $v \in \textit{body}(\sigma) \cap \textit{head}(\sigma)$, $v$ appears in a tainted position in $\textit{body}(\sigma)$ (resp. $\textit{head}(\sigma)$) and in position $p[i]$ in $\textit{head}(\sigma)$ (resp. $\textit{body}(\sigma)$).}
\end{definition}

\begin{definition}
\textit{A variable appearing in a tainted position of a rule $\sigma$ is named \textit{tainted variable} (with respect to $\sigma$). Let $\textit{tainted}(\sigma)$ be the set of all the tainted variables in the body of $\sigma$.}
\end{definition}

Observe that only affected positions can be ``upgraded'' to tainted and that, unlike affected variables that propagate to the head only when harmful (so all their respective body positions are affected), for tainted variables to propagate just a propagating tainted variable suffices. Also, unlike affected variables, they propagate from the body to the head and vice versa.
Moreover, notice that with respect to Definition~\ref{def:tainted}, if $x_i$ of $\eta$ does not appear in an affected position in $\textit{body}(\eta)$, only the position of $x_j$ will be tainted. On the contrary, if both $x_i$ and $x_j$ appear in affected positions of $\textit{body}(\eta)$, both the positions will become tainted.
The rationale here is that the definition tracks the positions in which the specific value of the labelled null may be relevant for joins and selections.

\medskip
We are now ready to give a syntactic characterization of harmless EGDs, which we name \textit{safe taintedness}.

\begin{theorem}
\label{th:tainted_implies_harmless}
Let $\Sigma=\Sigma_T\cup\Sigma_E$ be a set of TGDs $\Sigma_T$ and EGDs $\Sigma_E$ over schema $\mathbf{S}$. The set $\Sigma_E$ is harmless if (safe taintedness) for every %
dependency $\sigma \in \Sigma$: (i) every variable $v\in \textit{tainted}(\sigma)$  appears only once in $\textit{body}(\sigma)$, and, (ii) there are no constants appearing in tainted positions (i.e., there are no ground tainted positions).

\begin{proof}
Assume that (safe taintedness) for every %
dependency $\sigma \in \Sigma$: (i) every variable $v\in \textit{tainted}(\sigma)$  appears only once in $\textit{body}(\sigma)$, and, (ii) there are no constants appearing in tainted positions (i.e., there are no ground tainted positions). We need to show that $\Sigma$ is harmless.
By Definition~\ref{def:harmless}, we thus need to prove that for every database $D$ of schema $\mathbf{S}$, (i)~if 
$\textit{chase}(D,\Sigma)$ fails, then $\textit{chase}(D,\Sigma_T)$ does not satisfy $\Sigma_E$; (ii)~if $\textit{chase}(D,\Sigma)$ does not fail, there exists a homomorphism mapping  $\textit{chase}(D,\Sigma_T)$ onto $\textit{chase}(D,\Sigma)$.

\smallskip
\noindent
\textbf{No tainted positions}.
Let us first analyse the trivial cases.
If $\Sigma_E = \emptyset$, the chase cannot fail by construction, proving~Definition~\ref{def:harmless}(i), and it is sufficient to choose the empty homomorphism $h=\emptyset$ to fulfil ~Definition~\ref{def:harmless}(ii). Therefore $\Sigma_E$ is harmless. 
If $\Sigma_E \neq \emptyset$ but there are no tainted positions in $\Sigma$, by Definition~\ref{def:tainted}, either there are no affected positions in $\Sigma$, or there are no harmful variables in the body of any EGD $\eta: \boldsymbol{\phi}(\mathbf{x}) \to x_i=x_j$ of $\Sigma_E$. In either case, in the chase construction, EGDs only equate ground values. Therefore $\textit{chase}(D,\Sigma)$ can fail only for a hard violation of $\Sigma_E$ by $\textit{chase}(D,\Sigma_T)$, fulfilling Definition~\ref{def:harmless}(i), and since no EGD in $\Sigma_E$ binds any labelled null to a ground value, it is sufficient to choose the empty homomorphism $h=\emptyset$ to fulfil Definition~\ref{def:harmless}(ii).

\smallskip
\noindent
\textbf{Tainted positions}.
Let us now account for the presence of tainted positions in TGDs $\sigma: \boldsymbol{\phi}(\mathbf{x}) \to \exists\mathbf{z}~\boldsymbol{\psi}(\mathbf{y},\mathbf{z})$ of $\Sigma_T$. 
We start by proving that Definition~\ref{def:harmless}(ii) holds.
We must show that if $\textit{chase}(D,\Sigma)$ does not fail, there exists a homomorphism $h$ mapping $\textit{chase}(D,\Sigma_T)$ onto $\textit{chase}(D,\Sigma)$, 
and so that $\textit{chase}(D,\Sigma) = h(\textit{chase}(D,\Sigma_T))$. We proceed by proving the two directions separately, i.e., cases 
(a) $\textit{chase}(D,\Sigma) \subseteq h(\textit{chase}(D,\Sigma_T))$, and (b) $h(\textit{chase}(D,\Sigma_T)) \subseteq \textit{chase}(D,\Sigma)$.

\smallskip
\noindent
\textbf{Case (a)}:  $\textit{chase}(D,\Sigma) \subseteq h(\textit{chase}(D,\Sigma_T))$ holds.
To prove (a), we proceed by induction on $\textit{chase}(D,\Sigma)$
and show that it is possible to incrementally build $h$ 
such that it holds as an induction invariant that 
for every conjunction of facts $\boldsymbol{\varPsi}(\mathbf{y},\mathbf{z}) \in \textit{chase}(\Sigma,D)$ there exists a conjunction of facts
$\boldsymbol{\varPsi^T}(\mathbf{y},\mathbf{z}) \in \textit{chase}(\Sigma_T,D)$ which is mapped onto $\boldsymbol{\varPsi}(\mathbf{y},\mathbf{z})$
by $h$.

\smallskip\noindent
(a.1)~\textit{Base case}. Let $\boldsymbol{\varPsi_1}(\mathbf{y},\mathbf{z})$ be a conjunction of facts produced by one single initial chase step $\boldsymbol{\varphi_0}(\mathbf{x}) \xrightarrow{\sigma_\theta} \boldsymbol{\varPsi_1}(\mathbf{y},\mathbf{z})$ of $\textit{chase}(D,\Sigma)$, where
$\sigma_\theta$ denotes that a TGD $\sigma$ has been applied by the first TGD chase step, with a triggering homomorphism $\theta$ from $\mathbf{x}$ to $D$ as defined in Section~\ref{sec:preliminaries}. 
Since $\boldsymbol{\varphi_0}(\mathbf{x})$ is in $D$, the chase step $\boldsymbol{\varphi_0}(\mathbf{x}) \xrightarrow{\sigma_\theta} \boldsymbol{\varPsi_1^T}(\mathbf{y},\mathbf{z})$ of $\textit{chase}(D,\Sigma_T)$ is activated and hence it holds that  $\boldsymbol{\varPsi_1}(\mathbf{y},\mathbf{z}) = \boldsymbol{\varPsi_1^T}(\mathbf{y},\mathbf{z})$ by construction. We initialize $h_0=\emptyset$.

\smallskip\noindent
(a.2)~\textit{Inductive case}. 
We now consider a conjunction of facts $\boldsymbol{\varPsi_n}(\mathbf{y},\mathbf{z}) \in chase^{n}(D,\Sigma)$ generated by the first $n$ chase steps,
and want to prove
the existence of a
conjunction of facts $\boldsymbol{\varPsi^T_n}(\mathbf{y},\mathbf{z}) \in \textit{chase}(D,\Sigma_T)$ such that $h_n(\boldsymbol{\varPsi^T_n}(\mathbf{y},\mathbf{z}))=\boldsymbol{\varPsi_n}(\mathbf{y},\mathbf{z})$.
Two cases are possible. 
\begin{itemize}[leftmargin=4mm]

\item \textit{EGD chase step}. $\boldsymbol{\varPsi_n}(\mathbf{y},\mathbf{z})$ has been generated as a consequence of the bindings induced in $\textit{chase}(D,\Sigma)$ by an \textit{EGD chase step}: $\boldsymbol{\varphi_{n-1}}(\mathbf{x}) \xrightarrow{\sigma_{\theta}} x_i=x_j$. By inductive hypothesis, we assume that for $\boldsymbol{\varPsi_{n-1}}(\mathbf{y},\mathbf{z}) \in \textit{chase}(D,\Sigma)$ there is a conjunction of facts $\boldsymbol{\varPsi^T_{n-1}}(\mathbf{y},\mathbf{z}) \in \textit{chase}(D,\Sigma_T)$ such that $\boldsymbol{\varPsi_{n-1}}(\mathbf{y},\mathbf{z}) = h_{n-1}(\boldsymbol{\varPsi^T_{n-1}}(\mathbf{y},\mathbf{z}))$. 
We now let $h_n = h_{n-1} \circ \{x_i\to x_j\}$, that is, we update $h$, by composing it with the assignments introduced by the EGD chase step. By construction, it follows $h_n(\boldsymbol{\psi^T_n}(\mathbf{y},\mathbf{z}))=\boldsymbol{\psi_n}(\mathbf{y},\mathbf{z})$.

\item \textit{TGD chase step}. $\boldsymbol{\varPsi_n}(\mathbf{y},\mathbf{z})$ has been generated by a \textit{TGD chase step} $\boldsymbol{\varphi_{n-1}}(\mathbf{x}) \xrightarrow{\sigma_{\theta}} \boldsymbol{\varPsi_n}(\mathbf{y},\mathbf{z})$.
By inductive hypothesis, we assume that for $\boldsymbol{\varphi_{n-1}}(\mathbf{x}) \in \textit{chase}(D,\Sigma)$ there is a conjunction of facts $\boldsymbol{\varphi^T_{n-1}}(\mathbf{x}) \in \textit{chase}(D,\Sigma_T)$ such that $\boldsymbol{\varphi_{n-1}}(\mathbf{x}) = h_{n-1}(\boldsymbol{\varphi^T_{n-1}}(\mathbf{x}))$.
We show that if $\theta(\boldsymbol{\phi}(\mathbf{x})) \subseteq \textit{chase}^{n-1}(D,\Sigma)$, it follows $\theta(\boldsymbol{\phi}(\mathbf{x})) \subseteq \textit{chase}^{n-1}(D,\Sigma_T)$.
To prove this, we argue that if $\theta$ maps $\boldsymbol{\phi}(\mathbf{x})$ to $\boldsymbol{\varphi_{n-1}}(\mathbf{x})$, then it also maps
$\boldsymbol{\phi}(\mathbf{x})$ to
$\boldsymbol{\varphi^T_{n-1}}(\mathbf{x})$. As, $\boldsymbol{\varphi_{n-1}}(\mathbf{x}) = h_{n-1}(\boldsymbol{\varphi^T_{n-1}}(\mathbf{x}))$, the only two possibilities that $\theta$ applies to $\boldsymbol{\varphi_{n-1}}(\mathbf{x})$ and not to $\boldsymbol{\varphi^T_{n-1}}(\mathbf{x})$ are that: 
(p1) $h_{n-1}$ replaces a labelled null $\nu_i$ in position $\boldsymbol{\phi}[i]$ of $\boldsymbol{\varphi_{n-1}^T}(\mathbf{x})$ with a constant
and $\mathbf{\boldsymbol{\phi}}[i]$ is ground in $\boldsymbol{\phi}(\mathbf{x})$; 
(p2) $h_{n-1}$ replaces a labelled null $\nu_i$ in position $\boldsymbol{\phi}[i]$ of $\boldsymbol{\varphi^T_{n-1}}(\mathbf{x})$ with another variable null $\nu_j$, already appearing in position $\boldsymbol{\phi}[j]$, with $i\neq j$, and $\boldsymbol{\phi}(\mathbf{x})$ contains the same variable in positions $\boldsymbol{\phi}[i]$ and $\boldsymbol{\phi}[j]$.\\

Possibilities (p1) and (p2) do not occur in any of our cases, which we show next. In the base case, $h=\emptyset$. In the TGD inductive case, $h$ is not extended with assignments to constants, nor is it extended with assignments to labelled nulls already appearing in any atom (as we specify at the end this point). Finally, let us focus on the EGD inductive case (see previous point). If $\boldsymbol{\phi}[i]$ is ground in $\boldsymbol{\phi}(\mathbf{x})$, by safe taintedness, $\boldsymbol{\phi}[i]$ cannot be tainted, therefore, no EGD chase step can map any labelled null appearing in $\boldsymbol{\phi}[i]$ of $\boldsymbol{\varphi}^T_{n-1}(\mathbf{x})$ into a constant, and thus $h$ cannot contain such an assignment. If $\boldsymbol{\phi}[i]$ is tainted and is not ground
in $\boldsymbol{\phi}(\mathbf{x})$, by safe taintedness, there cannot be any other positions $\boldsymbol{\phi}[j]$, with $i\neq j$ where  $\boldsymbol{\phi}(\mathbf{x})$ has the same variable. Therefore neither (p1) nor (p2) ever occur.\\

We now build $h_n$ by extending $h_{n-1}$ as $h_n=h_{n-1} \cup \{{z^T_i}\to {z_i},\ldots\}$, where $z^T\to z_i$ denotes the correspondences between fresh labelled nulls of $\mathbf{z}$ in
$\boldsymbol{\varPsi^T_n}(\mathbf{y},\mathbf{z})$ resp.\ $\boldsymbol{\varPsi}(\mathbf{y},\mathbf{z})$.
It follows that  $h_n(\boldsymbol{\varPsi^T_n}(\mathbf{y},\mathbf{z}))=\boldsymbol{\varPsi_n}(\mathbf{y},\mathbf{z})$.

\end{itemize}

\smallskip
\noindent
\textbf{Case (b)}:  $h(\textit{chase}(D,\Sigma_T)) \subseteq \textit{chase}(D,\Sigma)$ holds. 
To prove (b), we proceed by induction on $\textit{chase}(D,\Sigma_T)$
and show that it is possible to incrementally build a 
homomorphism $h$
such that it holds as an induction invariant that 
for every conjunction of facts $\boldsymbol{\varPsi}^T(\mathbf{y},\mathbf{z}) \in \textit{chase}(\Sigma_T,D)$ there exists a conjunction of facts
$\boldsymbol{\varPsi}(\mathbf{y},\mathbf{z}) \in \textit{chase}(\Sigma,D)$ onto which $\boldsymbol{\varPsi}^T(\mathbf{y},\mathbf{z})$ is mapped
by $h$. 

\smallskip\noindent
(b.1)~\textit{Base case}. Let $\boldsymbol{\varPsi^T_1}(\mathbf{y},\mathbf{z})$ be a conjunction of facts produced by one single initial chase step $\boldsymbol{\varphi^T_0}(\mathbf{x}) \xrightarrow{\sigma_\theta} \boldsymbol{\varPsi^T_1}(\mathbf{y},\mathbf{z})$ of $\textit{chase}(D,\Sigma_T)$. 
Since $\boldsymbol{\varphi^T_0}(\mathbf{x})$ is in $D$, the chase step $\boldsymbol{\varphi^T_0}(\mathbf{x}) \xrightarrow{\sigma_\theta} \boldsymbol{\varPsi_1}(\mathbf{y},\mathbf{z})$ of $\textit{chase}(D,\Sigma)$ is activated and hence it holds $\boldsymbol{\varPsi^T_1}(\mathbf{y},\mathbf{z}) = \boldsymbol{\varPsi_1}(\mathbf{y},\mathbf{z})$ by construction. We initialize $h_0=\emptyset$.

\smallskip\noindent
(b.2)~\textit{Inductive case}. 
We now consider a conjunction of facts $\boldsymbol{\varPsi^T_n}(\mathbf{y},\mathbf{z}) \in chase^{n}(D,\Sigma_T)$ generated by the first $n$ chase steps,
and want to prove
the existence of a
conjunction of facts $\boldsymbol{\varPsi_n}(\mathbf{y},\mathbf{z}) \in \textit{chase}(D,\Sigma)$ such that $\boldsymbol{\varPsi_n}(\mathbf{y},\mathbf{z})=h_n(\boldsymbol{\varPsi^T_n}(\mathbf{y},\mathbf{z}))$. For $\boldsymbol{\varPsi^T_n}(\mathbf{y},\mathbf{z})$ to exist, it must have been generated by a \textit{TGD chase step} $\boldsymbol{\varphi^T_{n-1}}(\mathbf{x}) \xrightarrow{\sigma_{\theta}} \boldsymbol{\varPsi^T_n}(\mathbf{y},\mathbf{z})$.
By inductive hypothesis, we assume that for $\boldsymbol{\varphi^T_{n-1}}(\mathbf{x}) \in \textit{chase}(D,\Sigma_T)$ there is a conjunction of facts $\boldsymbol{\varphi_{n-1}}(\mathbf{x}) \in \textit{chase}(D,\Sigma)$ such that $h_{n-1}(\boldsymbol{\varphi^T_{n-1}}(\mathbf{x})) =\boldsymbol{\varphi_{n-1}}(\mathbf{x})$. If $\theta(\boldsymbol{\phi}(\mathbf{x})) \subseteq \textit{chase}^{n-1}(D,\Sigma_T)$, there exists a triggering homomorphism $\theta^\prime = h_{n-1}\circ\theta$ such that $\theta^\prime(\boldsymbol{\phi}(\mathbf{x})) \subseteq \textit{chase}^{n-1}(D,\Sigma)$.
We now build $h_n$ by extending $h_{n-1}$ as $h_n=h_{n-1} \cup \{{z^T_i}\to {z_i},\ldots\}$, where $z^T\to z_i$ denotes the correspondences between fresh labelled nulls of $\mathbf{z}$ in
$\boldsymbol{\varPsi^T_n}(\mathbf{y},\mathbf{z})$ resp.\ $\boldsymbol{\varPsi}(\mathbf{y},\mathbf{z})$. It follows $\boldsymbol{\varPsi_n}(\mathbf{y},\mathbf{z})=h_n(\boldsymbol{\varPsi^T_n}(\mathbf{y},\mathbf{z}))$. 

\smallskip
\noindent
Having shown both cases (a) and (b), this concludes our proof that  ~\ref{def:harmless}(ii) holds.

\smallskip
\noindent
\textbf{Definition ~\ref{def:harmless}(i) holds}.
We conclude the proof by showing that safe taintedness also implies Definition~\ref{def:harmless}(i). Let us proceed by contradiction and assume $\textit{chase}(D,\Sigma)$ fails while $\textit{chase}(D,\Sigma_T) \models \Sigma_E$.
This means that there is a conjunction of facts $\boldsymbol{\varPsi} \in \textit{chase}(D,\Sigma)$ violating an EGD of $\Sigma_E$ and $\boldsymbol{\varPsi} \not \in \textit{chase}(D,\Sigma_T)$. For this to happen, there must be a TGD chase step $\boldsymbol{\varphi}(\mathbf{x}) \xrightarrow{\sigma_{\theta}} \boldsymbol{\varPsi}({\mathbf{y,z}})$ or an EGD chase step $\boldsymbol{\varphi}(\mathbf{x}) \xrightarrow{\sigma_{\theta}} x_i=x_j$ of $\textit{chase}(D,\Sigma)$ giving rise to $\boldsymbol{\varPsi}$ as a consequence of the binding $x_i=x_j$. So, in either cases $\theta(\boldsymbol{\phi}(\mathbf{x})) \subseteq \textit{chase}(D,\Sigma)$, while there is no homomorphism $\theta^\prime$ such that $\theta^\prime(\boldsymbol{\phi}(\mathbf{x})) \subseteq \textit{chase}(D,\Sigma_T)$. This means there is another EGD chase step $\boldsymbol{\varphi}(\mathbf{x}) \xrightarrow{\sigma_{\theta^{\prime\prime}}} x_i^\prime=x_j^\prime$ that maps $\boldsymbol{\varphi}(\mathbf{x})$ into $\boldsymbol{\varphi}^\prime(\mathbf{x})$ as an effect of the binding $x_i^\prime=x_j^\prime$. There are two possibilities for such $\theta^\prime$ not to exist: (p1) the binding $x_i=x_j$ replaces a labelled null $\nu_i$ in position $\boldsymbol{\phi}[i]$ of $\boldsymbol{\varphi}(\mathbf{x})$ with a constant
and $\mathbf{\boldsymbol{\phi}}[i]$ is ground in $\boldsymbol{\phi}(\mathbf{x})$; 
(p2) the binding $x_i=x_j$ replaces a labelled null $\nu_i$ in position $\boldsymbol{\phi}[i]$ of $\boldsymbol{\varphi}(\mathbf{x})$ with another variable null $\nu_j$, already appearing in position $\boldsymbol{\phi}[j]$, with $i\neq j$, and $\boldsymbol{\phi}(\mathbf{x})$ contains the same variable in positions $\boldsymbol{\phi}[i]$ and $\boldsymbol{\phi}[j]$. Option (p1) contradicts the safe taintedness hypothesis: as $\boldsymbol{\phi}[i]$ is tainted, it cannot be ground. Options (p2) contradicts the safe taintedness hypothesis as well: as as $\boldsymbol{\phi}[i]$ is tainted, there cannot be any other positions $\boldsymbol{\phi}[j]$, with $i\neq j$ where  $\boldsymbol{\phi}(\mathbf{x})$ has the same variable. Therefore neither (p1) nor (p2) ever take place. Therefore we can conclude that if $\textit{chase}(D,\Sigma)$ fails, then $\textit{chase}(D,\Sigma_T) \not\models \Sigma_E$.

\smallskip\noindent
Having proved Definition~\ref{def:harmless}(i) and~(ii), we can conclude that under the theorem hypotheses, $\Sigma_E$ is harmless.
\end{proof}
\end{theorem}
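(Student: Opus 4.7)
The goal is to verify both conditions of Definition~\ref{def:harmless} from the safe taintedness hypothesis, with most of the work going into condition~(ii). The guiding intuition is that safe taintedness makes EGDs operationally inert toward TGD bodies: any renaming an EGD performs sits in positions where no TGD body can be selectively sensitive to that renaming, so no fresh triggering opportunities are created.

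For condition~(ii) the plan is to construct, by induction on the number of steps $n$ in $\textit{chase}(D,\Sigma)$, a sequence of homomorphisms $h_n$ such that every fact produced by step $n$ is the image under $h_n$ of a fact already in $\textit{chase}(D,\Sigma_T)$. I would start from $h_0 = \emptyset$ on the database facts and split the inductive case on the kind of step. For an EGD step equating $x_i$ and $x_j$, I extend by composition, $h_n = h_{n-1} \circ \{x_i \to x_j\}$, so the TGD-only nulls are mapped forward to their eventual identifications. For a TGD step with triggering homomorphism $\theta$ into $\textit{chase}^{n-1}(D,\Sigma)$, I would argue that $\theta$ lifts to a triggering homomorphism into $\textit{chase}^{n-1}(D,\Sigma_T)$ (so the same rule fires there on a preimage match), and then extend $h_{n-1}$ with the correspondences between the two batches of fresh existential nulls.

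The crux, and the step I expect to be the main obstacle, is the lifting of triggers in the TGD sub-case. The two chases can differ only where earlier $h$-steps have collapsed a TGD-only null into a constant or into another null. Thus a body match that exists in $\textit{chase}(D,\Sigma)$ but lacks a preimage in $\textit{chase}(D,\Sigma_T)$ would require either (p1) the TGD body has a constant in a position where a null has been renamed to that constant, or (p2) the TGD body repeats a variable across two positions where two distinct nulls have been identified. Both phenomena are possible only in tainted positions of the body, and safe taintedness forbids exactly (p1) ground tainted positions and (p2) tainted variables occurring more than once in a body. Hence neither obstruction can arise, the induction goes through, and the resulting $h$ is surjective because every fact of $\textit{chase}(D,\Sigma)$ is produced at some step. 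To close containment in the reverse direction, I would run a symmetric induction on $\textit{chase}(D,\Sigma_T)$, using the already-built $h$ to transport each triggering homomorphism into $\textit{chase}(D,\Sigma)$.

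Finally, condition~(i) will come out of essentially the same case analysis, phrased as a proof by contradiction: if $\textit{chase}(D,\Sigma)$ fails while $\textit{chase}(D,\Sigma_T) \models \Sigma_E$, then some EGD of $\Sigma_E$ is triggered in $\textit{chase}(D,\Sigma)$ by a body match with no preimage match in $\textit{chase}(D,\Sigma_T)$; the only routes for such a mismatch are again (p1) and (p2), both excluded by safe taintedness. I do not expect further surprises after this: the routine bookkeeping is tracking which nulls are fresh versus identified by $h$, which the case split above keeps under control.
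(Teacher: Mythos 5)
Your proposal follows essentially the same route as the paper's proof: the same step-indexed induction on $\textit{chase}(D,\Sigma)$ with $h_n = h_{n-1}\circ\{x_i\to x_j\}$ for EGD steps, the same trigger-lifting argument for TGD steps with obstructions (p1)/(p2) ruled out by the two clauses of safe taintedness, the same symmetric induction for the reverse containment via $h_{n-1}\circ\theta$, and the same contradiction argument for condition~(i). The plan is correct and matches the paper's argument in all essentials.
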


Let us now go back to the examples we have seen so far and put the safe taintedness syntactic condition into action. The EGDs of our first Example~\ref{ex:intro_example} are harmless. In fact, EGDs $\eta_1$ and $\eta_2$ taint the position $\textit{partOf}[2]$, which however never corresponds to either repeated or ground variables in the body of $\sigma_1$.
In Example~\ref{ex:first}, $\textit{comp}[2]$ is tainted, but this atom does not appear in the body of any TGDs, thus the EGD is harmless; similarly in Example~\ref{ex:second} and~\ref{ex:fifth}. In Example~\ref{ex:third}, $\textit{comp}[2]$ is tainted and $z$ in $\sigma_3$ is tainted and appears multiple times in the body, so we cannot conclude that $\eta_1$ is harmless. In Example~\ref{ex:fourth}, positions $b[2]$ and $b[3]$ are tainted. Tainted variables are not repeated in $\sigma_3$, while they are repeated in $\sigma_4$. Hence, also in this case, we cannot conclude $\eta_1$ is harmless.
We now propose one final example that highlights the importance of back-propagation of tainted positions.
\begin{example}
\label{ex:back_propagate}
\begin{align*}
  \textit{s}(x,y), a(k,y) \rightarrow q(x,k) \tag{$\sigma_1$} \\
  \textit{a}(x,y) \rightarrow \exists z~s(x,z) \tag{$\sigma_2$} \\
  \textit{s}(x,y) \rightarrow r(y,x) \tag{$\sigma_3$} \\
  \textit{r}(x,y),\textit{r}(x^\prime,y) \rightarrow x=x^\prime \tag{$\eta_1$} \\
\end{align*}
\textit{Let us consider a database $D=\{a(1,2), r(2,1)\}$. From $a(1,2)$ we generate $s(1,z_1)$ by $\sigma_2$ and then $r(z_1,1)$ by $\sigma_3$. The EGD $\eta_1$ then unifies $z_1=2$ and so $\sigma_1$ can fire giving rise to $q(1,1).$
$\blacksquare$}
\end{example}

Interestingly, in Example~\ref{ex:back_propagate}, 
$\eta_1$ taints position $r[1]$ which does not directly appear in the body of any rules. Conversely, it back propagates to $s[2]$, which in the body of $\sigma_1$ is duplicated, witnessing the impossibility to conclude $\eta_1$ is harmless and, in fact, exhibits a harmful behaviour for example with the database $D$ in the example.

\section{Conjunctive Query Answering with Harmless EGDs}
\label{sec:efficient_evaluation}

Definition~\ref{def:harmless} immediately suggests an approach to ontological BCQ answering in the presence of a set of dependencies $\Sigma = \Sigma_T \cup \Sigma_E$, where $\Sigma_E$ is a set of harmless EGDs. 

\begin{enumerate}[leftmargin=4mm]
    \item Decide weather $D\cup\Sigma$ is satisfiable, otherwise trivially conclude $D\cup\Sigma \models Q$.
    \item If $D\cup\Sigma$ is satisfiable, by Definition~\ref{def:harmless_and_qa}, there exists a homomorphism $h$ mapping 
    $\textit{chase}(D,\Sigma_T)$ onto $\textit{chase}(D,\Sigma)$
    such that $\textit{chase}(D,\Sigma) \models Q$ iff $h(\textit{chase}(D,\Sigma_T))\models Q$.
    Determine $h$ and check whether $h(\textit{chase}(D,\Sigma_T))\models Q$.
\end{enumerate}

\medskip\noindent\textbf{Warded Datalog$^\pm$}. To show the main results, we extend the working definitions given in Section~\ref{sec:safe_taintedness} by recalling the preliminaries of Warded Datalog$^\pm$~\cite{GoPi15, BeSG18}. Given a set of rules $\Sigma$, a rule $\sigma \in \Sigma$ is \textit{warded} if all the dangerous variables $v \in \textit{body}(\sigma)$ appear in a single body atom, the \textit{ward}. A set $\Sigma$ is warded if the body variables of all the rules in $\Sigma$ are warded. BCQ answering under a set of TGDs $\Sigma_T$ is decidable and \textsf{PTIME} in Warded Datalog$^\pm$. 

We say that two sets of rules $\Sigma$ and $\Sigma^\prime$ are \textit{CQ-equivalent} if, for every database $D$ and BCQ $Q$, we have that $D\cup\Sigma \models Q$ iff $D\cup\Sigma^\prime \models Q$. The theoretical underpinnings of Warded Datalog$^\pm$ allow to rewrite $\Sigma_T$ (with the \textit{harmful join elimination} procedure) into a CQ-equivalent set of TGDs $\Sigma^\prime$, for which query answering can be 
performed within a finite variant of the chase ($\textit{chase}^W$) where facts are considered equivalent modulo isomorphism of labelled nulls, and so facts isomorphic to facts already generated are not generated. We call such assumption ---and $\textit{chase}^W$--- \textit{warded semantics}. 
In particular, for a database $D$ and a rewriting $\Sigma_T^\prime$ obtained via harmful join elimination from a set of warded TGDs $\Sigma_T$, the following property holds: there exists a finite $\textit{chase}^W(D,\Sigma_T^\prime) \subseteq \textit{chase}(D,\Sigma_T^\prime)$ such that, for every atomic BCQ $Q$,  $Q\models \textit{chase}^W(D,\Sigma_T^\prime)$ iff $Q\models \textit{chase}(D,\Sigma_T^\prime)$. Considering atomic BCQ is not a limitation here, as any BCQ can be rewritten as an atomic BCQ, by encoding the conjunctive query as a new rule of $\Sigma_T$. While warded semantics applies to warded TGDs, no conclusions can be drawn on whether it is applicable in the context of harmless EGDs (and, as we shall see, it is not). Hence, we consider
the \textit{universal semantics} of $D$ and $\Sigma_T$ and represent it as a purely theoretical subset $\textit{chase}^B(D,\Sigma_T)\subseteq\textit{chase}(D,\Sigma_T)$, finite and CQ-equivalent to $\textit{chase}(D,\Sigma_T)$, i.e., a universal model for $D\cup \Sigma_T$.

\medskip
In the next sections, we analyse the described approach to BCQ answering in the case where TGDs are specified in Warded Datalog$^\pm$, because of its relevance as a fragment with a favourable trade-off between computational complexity and expressive power. In particular, we will provide a concrete space-efficient implementation of the universal semantics $\textit{chase}^B$ for Warded Datalog$^\pm$, namely $\textit{chase}^{B_W}(D,\Sigma)$, which keeps enough information to apply harmless EGDs.
While beyond the scope of this paper, we conjecture that the results generalize to any decidable Datalog$^\pm$ fragment and we believe that BCQ decidability and data complexity just descend from those of the fragment adopted for the TGDs, a very desirable property of harmless EGDs.

\medskip\noindent\textbf{Checking Satisfiability}
To check satisfiability we encode the assignments performed by the EGDs into a set of TGDs $\Sigma_V$. Then, given a set $\Sigma = \Sigma_T \cup \Sigma_E$, where $\Sigma_E$ is harmless, we have that satisfiability holds if and only if for a non-contradiction query $Q_V$, it holds $D\cup(\Sigma_T\cup\Sigma_V)\not\models Q_V$, as by Definition~\ref{def:harmless}(i), all the failures of $\textit{chase}(D,\Sigma)$ are hard violations of the chase of $\Sigma_T$ over $\Sigma_E$. The encoding is done as follows. We initialize $D^\prime = \textit{chase}^B(D,\Sigma_T)$, where the labelled nulls generated by the chase are considered as constants. We then extend $D^\prime$ 
with facts $\textit{neq}(c_1,c_2)$ resp.\ $\textit{eq}(c_1,c_1)$ for each pair of distinct resp.\ equal constants $c_1,c_2 \in \textit{dom}(D)$ (the set of all constants of $D$).
We define a new set $\Sigma_V$ containing a TGD $\sigma$ for each EGD $\eta: \boldsymbol{\phi}(\mathbf{x}) \to x_i=x_j$ of $\Sigma_E$. The TGD $\sigma$ is built as $\boldsymbol{\phi}^\prime(\mathbf{x})\to\textit{eq}(x_i,x_j)$.
We also add to $\Sigma_T^\prime$ the usual rules to encode equality symmetry and transitivity. 
Finally, the non-contradiction query is defined as $Q_V: q \leftarrow \textit{eq}(x_i,x_j),\textit{neq}(x_i,x_j)$.
$\Sigma_V$ does not contain existentials and is therefore warded.
It is not difficult to prove that if $D^\prime\cup\Sigma_V \models Q_V$ ---which can be tested by decidability of BCQ answering in Warded Datalog$^\pm$--- then $\textit{chase}(D,\Sigma)$ fails and $D\cup\Sigma$ is unsatisfiable.

\medskip\noindent\textbf{Finding the Homomorphism}. Having checked the satisfiability of $D\cup\Sigma$, a technique to determine a homomorphism from $\textit{chase}(D,\Sigma_T)$ onto $\textit{chase}(D,\Sigma)$ is as follows. We initialize an empty homomorphism $h=\emptyset$. We then apply the EGDs of $\Sigma_E$ on $\textit{chase}^B(D,\Sigma_T)$ to fixpoint, i.e., we compute $\textit{chase}^H(D,\Sigma) = \textit{chase}(\textit{chase}^B(D,\Sigma_T),\Sigma_E)$. 
This process is finite for a fixed set of EGDs~\cite{CGOP12}. 
For each EGD chase step $\boldsymbol{\varphi}(\mathbf{x}) \xrightarrow{\sigma_{\theta}} x_i=x_j$ that binds a labelled null $\theta(x_i)$ to either a constant or another labelled null $\theta(x_j)$, we update $h$ by adding the assignment ${\theta(x_i) \to \theta(x_j)}$,  and replacing all occurrences of $\theta(x_i)$ appearing in the right-hand side of any assignment in $h$ with $\theta(x_j)$.
Then, for the labelled nulls of each $\boldsymbol{\varphi}^\prime(\mathbf{x})$ of $\textit{chase}(D,\Sigma_T) \setminus \textit{chase}^B(D,\Sigma_T)$, isomorphic to a fact $\boldsymbol{\varphi}(\mathbf{x})$ of $\textit{chase}^B(D,\Sigma)$, we inherit the assignments from those enforced by $h$ on the respective labelled nulls of $\boldsymbol{\varphi}(\mathbf{x})$ and extend $h$ as a consequence.
Given the homomorphism $h$, we have that $D\cup\Sigma \models Q$ can be decided by checking whether $h(\textit{chase}(D,\Sigma_T)) \models Q$, which is done by scanning the universal semantics and so checking whether $h(\textit{chase}^B(D,\Sigma_T)) \models Q$. It remains to argue for the technique correctness.

\begin{theorem}
\label{th:decidability}
Given a set $\Sigma = \Sigma_T \cup \Sigma_E$ of warded TGDs and harmless EGDs, and a database $D$, if $D\cup\Sigma$ is satisfiable, for every query $Q$, it holds
$\textit{chase}^H(D,\Sigma)\models Q$ iff $\textit{chase}(D,\Sigma)\models Q$. 
\begin{proof}We prove the two directions of the implication separately.

\noindent($\Rightarrow$)  Let us proceed by contradiction and assume there exists a BCQ $Q: q\leftarrow\boldsymbol{\phi}(\mathbf{x})$ such that $\textit{chase}^H(D,\Sigma)\models Q$ and $\textit{chase}(D,\Sigma)\not\models Q$. Thus, there exists a homomorphism $\theta$ such that $\theta(\boldsymbol{\phi}(\mathbf{x}))\in \textit{chase}^H(D,\Sigma)$ and there does not exist any homomorphism $\theta^\prime$ such that $\theta^\prime(\boldsymbol{\phi}(\mathbf{x}))\in \textit{chase}(D,\Sigma)$. By harmlessness of $\Sigma_E$, we have that for a homomorphism $h$, it holds $h(\textit{chase}(D,\Sigma_T))\models Q$ iff $\textit{chase}(D,\Sigma)\models Q$. Therefore, there does not exist any fact (or conjunction thereof) $\boldsymbol{\varphi}(\mathbf{x}) \in \textit{chase}(D,\Sigma_T)$ such that $h(\boldsymbol{\varphi}(\mathbf{x})) \in \textit{chase}(D,\Sigma)$, with $\theta^\prime(\boldsymbol{\phi}(\mathbf{x}))=\boldsymbol{\varphi}(\mathbf{x})$. Then, since $\Sigma_T$ is warded, there does not exist any fact $\boldsymbol{\varphi}^\prime(\mathbf{x}) \in \textit{chase}^B(D,\Sigma^T)$, where $\boldsymbol{\varphi}^\prime(\mathbf{x}) = \boldsymbol{\varphi}(\mathbf{x})$ modulo the isomorphism of labelled nulls and such that $h(\boldsymbol{\varphi}^\prime(\mathbf{x}))\in\textit{chase}(D,\Sigma)$. Yet, we have that $\theta(\boldsymbol{\phi}(\mathbf{x}))\in\textit{chase}^H(D,\Sigma)$ and so, since $\Sigma_E$ is harmless and $h$ is surjective, there is some fact $\boldsymbol{\varphi}^\prime(\mathbf{x}) \in \textit{chase}^B(D,\Sigma_T)$ such that $\theta(\boldsymbol{\phi}(\mathbf{x}))=h(\boldsymbol{\varphi}^\prime(\mathbf{x}))$, which we assumed not to exist. Having reached a contradiction, we conclude $\textit{chase}(D,\Sigma)\models Q$.

\noindent($\Leftarrow$) Let us proceed again by contradiction and assume there exists a BCQ $q\leftarrow\boldsymbol{\phi}(\mathbf{x})$ such that $\textit{chase}(D,\Sigma)\models Q$ and $\textit{chase}^H(D,\Sigma)\not\models Q$. Thus, there exists a homomorphism $\theta$ such that $\theta(\boldsymbol{\phi}(\mathbf{x}))\in \textit{chase}(D,\Sigma)$ and there does not exist any homomorphism $\theta^\prime$ such that $\theta^\prime(\boldsymbol{\phi}(\mathbf{x}))\in \textit{chase}^H(D,\Sigma)$. By harmlessness of $\Sigma_E$, we have that for a homomorphism $h$, it holds $h(\textit{chase}(D,\Sigma_T))\models Q$ iff $\textit{chase}(D,\Sigma)\models Q$.
Then there exists a fact (or conjunction thereof) $\boldsymbol{\varphi}(\mathbf{x}) \in \textit{chase}(D,\Sigma_T)$ such that $h(\boldsymbol{\varphi}(\mathbf{x})) = \theta(\boldsymbol{\phi}(\mathbf{x}))\in \textit{chase}(D,\Sigma)$. Now, since $\Sigma_T$ is warded, we have that $\boldsymbol{\varphi}(\mathbf{x}) \in \textit{chase}^B(D,\Sigma)$, modulo fact isomorphism. However, since we have assumed that there does not exist a homomorphism $\theta^\prime$ such that $\theta^\prime(\boldsymbol{\phi}(\mathbf{x}))\in \textit{chase}^H(D,\Sigma)$, by harmlessness of $\Sigma_E$ and wardedness of $\Sigma_T$, there cannot exist any fact $\theta^\prime(\boldsymbol{\phi}(\mathbf{x}))=\boldsymbol{\varphi}(\mathbf{x}) \in \textit{chase}^B(D,\Sigma)$ such that $\boldsymbol{\varphi}(\mathbf{x}) \in \textit{chase}^H(D,\Sigma)$. Having reached a contradiction, we conclude $\textit{chase}^H(D,\Sigma)\models Q$.
\end{proof}
\end{theorem}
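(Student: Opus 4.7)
The plan is to prove both directions by contradiction, combining two ingredients already in hand. First, from the hypothesis that $\Sigma_E$ is harmless, Theorem~\ref{th:def_equivalence} together with Definition~\ref{def:harmless_and_qa} supplies a surjective homomorphism $h$ such that $\textit{chase}(D,\Sigma) \models Q$ iff $h(\textit{chase}(D,\Sigma_T)) \models Q$. Second, from the hypothesis that $\Sigma_T$ is warded, we have the CQ-equivalence $\textit{chase}^B(D,\Sigma_T) \equiv_{CQ} \textit{chase}(D,\Sigma_T)$: every fact of the full chase appears in $\textit{chase}^B$ up to isomorphism of labelled nulls. The goal is then to glue these two facts together through the EGD layer that connects $\textit{chase}^B$ to $\textit{chase}^H$.

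For the ($\Rightarrow$) direction, I would assume a BCQ $Q: q\leftarrow\boldsymbol{\phi}(\mathbf{x})$ with a witness $\theta$ such that $\theta(\boldsymbol{\phi}(\mathbf{x}))\in\textit{chase}^H(D,\Sigma)$ and no witness for $\textit{chase}(D,\Sigma)\models Q$. Since $\textit{chase}^H(D,\Sigma) = \textit{chase}(\textit{chase}^B(D,\Sigma_T),\Sigma_E)$, the witnessing atoms trace back to facts in $\textit{chase}^B(D,\Sigma_T)$ reshaped by EGD steps. By wardedness, each such base fact corresponds, modulo isomorphism of labelled nulls, to a fact of $\textit{chase}(D,\Sigma_T)$; and under the harmlessness homomorphism $h$, those facts map into $\textit{chase}(D,\Sigma)$. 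Composing these two transformations yields a homomorphism $\theta^\prime$ witnessing $\textit{chase}(D,\Sigma)\models Q$, a contradiction.

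The dual ($\Leftarrow$) direction is symmetric: assume a witness $\theta$ of $\textit{chase}(D,\Sigma)\models Q$; by harmlessness there exist facts $\boldsymbol{\varphi}(\mathbf{x}) \in \textit{chase}(D,\Sigma_T)$ with $h(\boldsymbol{\varphi}(\mathbf{x})) = \theta(\boldsymbol{\phi}(\mathbf{x}))$; by wardedness, isomorphic representatives $\boldsymbol{\varphi}^\prime(\mathbf{x}) \in \textit{chase}^B(D,\Sigma_T)$ exist; and applying $\Sigma_E$ to these representatives must produce the same identifications as on the full chase, so the witnessing atom reappears in $\textit{chase}^H(D,\Sigma)$, contradicting the assumption.

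The main obstacle, threading both directions, is justifying that EGD application \emph{commutes} with the warded identification-modulo-isomorphism: applying $\Sigma_E$ to $\textit{chase}^B(D,\Sigma_T)$ yields, up to isomorphism of labelled nulls, the same bindings as applying $\Sigma_E$ to $\textit{chase}(D,\Sigma_T)$. The intuition is that EGD triggers fire on tuples of terms appearing in tainted positions, and by construction of the warded semantics these tuples are preserved up to isomorphism across the two chases; hence the same equalities are imposed and the induced homomorphisms agree on representatives. This commutation argument is what makes the chain $\textit{chase}^H \leftrightarrow \textit{chase}(\textit{chase}(D,\Sigma_T),\Sigma_E) \leftrightarrow \textit{chase}(D,\Sigma)$ sound, and once it is spelled out the two contradictions close routinely.
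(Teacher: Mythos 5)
Your proposal follows essentially the same route as the paper's own proof: both directions by contradiction, gluing the harmlessness homomorphism $h$ (via Definition~\ref{def:harmless_and_qa}) to the wardedness-based correspondence, modulo isomorphism of labelled nulls, between $\textit{chase}^B(D,\Sigma_T)$ and the full TGD chase, and then passing through the EGD layer to reach $\textit{chase}^H$. The commutation issue you flag as the main obstacle is precisely the step the paper also leaves at the level of assertion (``since $\Sigma_E$ is harmless and $h$ is surjective, there is some fact \ldots''), so your write-up matches the paper in both structure and level of detail.
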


The technique for CQ answering we have discussed witnesses the decidability of our extension of Warded Datalog$^\pm$ with harmless EGDs. We have followed a pragmatic approach, where we basically augment the universal semantics of $D\cup \Sigma_T$ by chasing it under $\Sigma_E$: as we shall see, this idea lends itself to a practical application, once a space-efficient implementation for $\textit{chase}^B$ is provided. However, it does not allow us to immediately draw conclusions about the data complexity of the new fragment. In fact,
the \textsf{PTIME} membership proof of warded Datalog$^\pm$ hinges on the possibility to generate the ground semantics $\Sigma(D)_\downarrow = \{\overline{a} : \overline{a}\in\textit{chase}(D,\Sigma) \wedge \textit{dom}(a) = \mathbf{C} \}$ in polynomial time and does not consider the universal one. Hence, although our technique to chase the EGDs can be implemented in sub-polynomial time for a fixed set of EGDs~\cite{CGOP12}, nothing is said about the complexity of generating a suitable restriction of the universal semantics. 
On the other hand, it is to prove that the presence of harmless EGDs does not increase the complexity of computing the ground semantics. We follow this approach, revisiting and extending the results of~\cite{GoPi15}. The fundamental differences that makes our task more challenging is the need for a satisfiability check and a refocus of the \textsf{PTIME} membership proof of $\Sigma(D)_\downarrow$ in the presence of harmless EGDs. The following results first show time complexity with warded TGDs and harmless EGDs, under the assumption $D\cup\Sigma$ is satisfiable. Then, we illustrate how the satisfiability problem can be reduced to an instance of CQ answering  over warded TGDs and harmless EGDs, where $D\cup\Sigma$ is satisfiable.

\begin{theorem}
\label{th:complexity}
CQ answering for Warded Datalog$^\pm$ and harmless EGDs is \textsf{PTIME}-complete in data complexity, provided that the set of dependencies is satisfiable.
\begin{proof}[Proof (Sketch)]
The lower bound descends from Datalog being \textsf{PTIME}-hard~\cite{dantsin}. It is easy to verify that any set of Datalog dependencies are also a set of Warded Datalog$^\pm$ dependencies: in the absence of existential quantification, there are no affected variables and the warded condition is trivially respected. So we concentrate on the upper bound, considering a query $Q$, a database $D$ and a fixed set $\Sigma = \Sigma_T \cup \Sigma_E$ of warded TGDs and harmless EGDs. Our goal is proving that deciding whether $D\cup\Sigma \models Q$ is feasible in polynomial time in $D$.

Since we assume that $D\cup\Sigma$ is satisfiable, we can proceed with the following two steps. After a preliminary elimination of the negation (step~(1)), we just prove that checking whether an atom $\varPsi(\mathbf{t})$ belongs to the ground semantics $\Sigma^+(D^+)_\downarrow$ of the (negation-free) $\Sigma^+$ can be done in polynomial time (step~(2)). 
The correctness of the proof approach is straightforward.

\begin{itemize}[leftmargin=4mm]
\item By construction, we have that $D^+\cup\Sigma^+\models Q$ if for some $\varPsi(\mathbf{t}) \in \Sigma^+(D^+)_\downarrow$, it holds that $\varPsi(\mathbf{t})\in Q(\Sigma^+(D^+)_\downarrow)$. In other terms, to check whether $Q$ is satisfied, we need to scan the ground semantics $\Sigma^+(D^+)_\downarrow$. 
\item The size of $\Sigma^+(D^+)_\downarrow$ is polynomial in the size of $D$: the result has been proven for warded TGDs~\cite{ArenasGP18} and is not affected by harmless EGDs, which cannot in turn trigger TGDs and produce new facts.
\end{itemize}

\begin{enumerate}[leftmargin=4mm]

\item \underline{Elimination of Negation}: We build a database $D^+ \supseteq D$ and eliminate the negation from $\Sigma$ and $Q$, so that $D^+ \cup \Sigma^+ \models Q$ iff $D\cup\Sigma\models Q$. The procedure to build $D^+$ and $\Sigma^+$ is the standard one~\cite{GoPi15}, which applies to EGDs as well, and so $\Sigma^+ = \Sigma_T^+\cup\Sigma_E^+$. Since the negation in Warded Datalog$^\pm$ is stratified and ground, $\Sigma^+$ is computed from $\Sigma$ by iteratively replacing each negative atom $\neg a(\mathbf{x})$ with a positive atom $\overline{a}(\mathbf{x})$, where the extension of $\overline{a}(\mathbf{x})$ in $D^+$ is the complement of $a$ with respect to the ground semantics $\Sigma(D)_\downarrow$ (see step~(2)).

\item \underline{Building the Ground Semantics}: It remains to show the crucial technical lemma that the negation-free ---and we shall omit the plus superscript from hereinafter--- $\Sigma(D)_\downarrow$ can be built in polynomial time. As we have argued, this task has the same complexity as checking whether a fact $\varPsi(\mathbf{t})$ belongs to $\Sigma^+(D^+)_\downarrow$.

Let us first normalize $\Sigma_T$ so that every TGD $\sigma$ having more than one body atom is either \textit{head-ground}, i.e., all the head terms are constants, harmless and universally quantified variables, 
or \textit{semi-body-ground}, i.e., there exists at most one atom in $\textit{body}(\sigma)$ containing a harmful variable. We
partition $\Sigma_T$ into $\{ \Sigma_h, \Sigma_b \}$, where  $\Sigma_h$ is the set comprising all the head-ground TGDs, $\Sigma_b$ is the set of semi-body-ground TGDs. Without loss of generality, we will consider the presence of at most one existentially quantified variable in the TGDs.

\smallskip
We now provide a high-level description of a procedure
\textsc{EgdProofTree}($D,\Sigma_h,\Sigma_b,\Sigma_E,\varPsi(\mathbf{t})$) that takes as input a database $D$, a set of warded TGDs   partitioned into $\{ \Sigma_h, \Sigma_b \}$ as described, a set of harmless EGDs $\Sigma_E$, and a fact $\varPsi(\mathbf{t})$. The procedure accepts if $\varPsi(\mathbf{t}) \in \Sigma(D)_\downarrow$, rejects otherwise.
The procedure builds a proof tree.
Roughly speaking, a proof tree can be obtained from a proof (an execution of the chase, or a chase graph), by reversing the edges and unfolding the obtained graphs into a tree, by repeating some nodes. Conversely, we can obtain a chase graph, by reversing the edges of a proof tree and collapsing some nodes. Intuitively, building the two structures is equivalent.

We apply resolution steps starting from $\varPsi(\mathbf{t})$ until the database $D$ is reached. If not, $\varPsi(\mathbf{t})$ is rejected. This is done via an alternating procedure that builds the various branches of a proof tree for $\varPsi(\mathbf{t})$ in parallel universal computations, ensuring the compatibility of the branches. This is the main technical challenge, as we need to ensure that the introduced labelled nulls occurring in different branches of the proof tree represent the same term of $\Sigma(D)_\downarrow$, even when harmless EGDs are involved. 

\smallskip
\end{enumerate}
Let us introduce some conceptual tools we will use in the procedure. 
\begin{itemize}
\item the \textit{taint cause} $\chi(p[i])$ for a tainted position $p[i]$ is the set of EGDs of $\Sigma_E$ that can assign the value of a labelled null in $p[i]$. More formally, $\chi(p[i])$ can be inductively defined as the set of all the EGDs $\eta: \boldsymbol{\phi}(\mathbf{x}) \to x_i=x_j$ of $\Sigma_E$ such that: (i)~$p[i]$ is the position of $x_i$ (resp $x_j$) in $\textit{body}(\eta)$ and $p[i]$ is harmful in $\eta$, or, (ii)~for some TGD $\sigma\in\Sigma_T$ and some variable $v\in\textit{body}(\sigma)\cap\textit{head}(\sigma)$, we have that $v$ appears in a body (resp.\ head) position that is in $\chi(p[i])$ and in position $p[i]$ in $\textit{head}(\sigma)$ (resp.\ $\textit{body}(\sigma)$). For the sake of simplicity, in the rest of the proof we will assume that $x_i$ is the harmful variable in the EGD body.

\item Given an atom $\varPsi$, a TGD $\sigma\in \Sigma_h$, and a homomorphism $\theta$ from $\textit{head}(\sigma)$ to $\varPsi$, a \textit{head-ground TGD resolution step} $\varPsi\triangleright^h_{\theta_\sigma}$ produces the set of atoms $\{ \theta_\sigma^B(b) : b \in \textit{body}(\sigma) \}$, where $\theta_\sigma^B$ is the extension of $\theta$ that assigns variables appearing $\textit{body}(\sigma) \setminus \textit{head}(\sigma)$ to fresh labelled nulls or constants from $\textit{dom}(D)$.

\item Given an atom $\varPsi$, a TGD $\sigma\in \Sigma_h$, and a homomorphism $\theta$ from $\textit{head}(\sigma)$ to $\varPsi$, a \textit{semi-body-ground TGD resolution step} $\varPsi\triangleright^h_{\theta_\sigma}$ produces the set of atoms $\{ \theta_\sigma^B(b) : b \in \textit{body}(\sigma) \}$, where $\theta_\sigma^B$ is the extension of $\theta$ that assigns variables appearing $\textit{body}(\sigma) \setminus \textit{head}(\sigma)$ 
in other atoms than the ward, to constants from $\textit{dom}(D)$.
The variables in the ward are mapped to either labelled nulls or constants from $\textit{dom}(D)$.

\item Given an atom $\varPsi(\mathbf{t})$ where $\pi=\varPsi(t[i])$ is a tainted position, an EGD $\eta: \boldsymbol{\phi}(\mathbf{x}) \to x_i=x_j$, with $\eta \in \chi(\pi)$, a homomorphism $\theta$ that maps $x_j$ to $\varPsi(t[i])$, a \textit{an EGD resolution step} $\varPsi(\mathbf{t})\triangleright^\pi_{\theta_\eta}$ produces the set of atoms $\{ \theta_\eta^B(b) : b \in \textit{body}(\eta) \}\cup\{\varPsi^\prime(\mathbf{t}\} )\}$, 
where $\theta_\eta^B$ is the extension of $\theta$ that 
assigns variables appearing $\textit{body}(\sigma)$
to either fresh labelled nulls or constants from $\textit{dom}(D)$,
and $\varPsi^\prime(\mathbf{t})$ is obtained from $\varPsi(\mathbf{t})$, by replacing the value at $\varPsi(t[i])$ with $\theta_\eta^B(x_i)$.
The intuition behind a partial EGD resolution step is explaining the value $\varPsi(t[i])$ of one single tainted position via one EGD application.

\item Given a set of atoms $S$, and a set of labelled nulls $N \subseteq \mathbf{N}$, a partition $\Pi_N(S)$ is known as as an \textit{[N]-linking} partition~\cite{ArGP14} if $\Pi_N(S)$ is a partition for $S$ and for each labelled null $\nu \in (\textit{dom}(S)\cap\mathbf{N})\setminus N$ there is exactly one subset $S_i\in \Pi_N(S)$ in which $\nu$ appears. A partition $\Pi_N(S)$ is defined as \textit{[N]-optimal} if it is [N]-linking and there does not exist any other [N]-linking partition $\Pi^\prime_N(S)$ such that $|\Pi^\prime_N(S)|>|\Pi_N(S)|$. For example, consider the set $S = \{p(c, z_1),p(z_1, z_2),p(z_2, z_3),p(z_3, z_4)\}$, where $z_1,z_2,z_3,z_4\in\mathbf{N}$ and let $N=\{z_2,z_3,z_4\}$. 
The partition $\{\{p(c, z_1),p(z_1, z_2)\}, \{p(z_2, z_3),p(z_3, z_4)\}\}$ is 
[N]-linking as every $z_i \in \textit{dom}(S)\cap\mathbf{N}$ occurs exactly in one component. Yet, it is not [N]-optimal, because the partition $\{\{p(c, z_1),p(z_1, z_2)\}, \{p(z_2, z_3)\}, \{p(z_3, z_4)\}\}$ is still [N]-linking, but has higher cardinality. It turns out, the latter is [N]-optimal, as $\{p(c, z_1),p(z_1, z_2)\}$ cannot be split without separating the $z_1$ occurrences.

\end{itemize}

\smallskip\noindent
The procedure performs the following steps. Let us first initialize $S=\{\varPsi(\mathbf{t})\}$. Start from step~(a).
\smallskip

\begin{enumerate}
    \item \textbf{EGD resolution}. For each $\varPsi(\mathbf{t})\in S$, guess whether skipping to step~(b). If not, if $\varPsi(\mathbf{t})$ does not contain any tainted position, go to step~(b). If $\varPsi(\mathbf{t})$ contains a tainted position $\pi$, guess an EGD $\eta \subseteq \Sigma_E$ such that $\eta\in\chi(\pi)$ and let  $\theta_{\eta}$ be a homomorphism from $x_j$ to $\varPsi(\mathbf{t[\pi]})$.
    Apply the resolution step $\varPsi(\mathbf{t})\triangleright^{\pi{}}_{\theta_{\eta}}\{b_{\varPsi_1},\ldots,b_{\varPsi_n}\}$  and let $S_\varPsi=\{b_{\varPsi_1},\ldots,b_{\varPsi_n}\}$ be the generated atoms. Then, with $S^+=\bigcup_{\varPsi\in S}S_{\varPsi}$, we compute $\Pi_N(S^+)=\{S_1^+,\ldots,S_n^+\}$ as the [$\emptyset$]-optimal partition of $S^+$. Universally select and prove in parallel computations each element of $\Pi_{\emptyset}(S^+)$, recursively from step~(a). The goal of this partitioning strategy is keeping in the same universal computation all the labelled nulls appearing in more than one branch.
    \item \textbf{TGD $\in \Sigma_h$ resolution}. If $\varPsi(\mathbf{t}) \in D$ then accept. If $\textit{dom}(\varPsi(\mathbf{t}))\cap\mathbf{N}\neq\emptyset$, then go to step~(c) with $\{\varPsi(\mathbf{t})\}$. Else, guess a TGD $\sigma \in \Sigma_h$ and a homomorphism $\theta_\sigma$ from $\textit{head}(\sigma)$ to $\varPsi(\mathbf{t})$. Apply the resolution step  $\varPsi(\mathbf{t})\triangleright^h_{\theta_\sigma}S$, where $S=\{b_1,\ldots,b_n\}$ are the generated atoms. If such $\sigma$ or $\theta_\sigma$ cannot be found, then reject. Let $\Pi_{\emptyset}(S)$ be an [$\emptyset$]-optimal partition of $S$. Universally select each $S\in \Pi_{\emptyset}(S)$ and proceed as follows: if 
    $S=\{\varPsi^\prime(\mathbf{t}^\prime)\}$ and $\textit{dom}(S_i) \subseteq \textit{dom}(D)$, recursively call step~(b) for  $\varPsi^\prime(\mathbf{t}^\prime)$; else, that is, if $\textit{dom}(S)\cap\mathbf{N}\neq\emptyset$, go to step~(c).
    
    \item \textbf{TGD $\in \Sigma_b$ resolution}. Let us initially mark all the labelled nulls occurring in $S$,  by storing them into a set $R_{S} = \{(\nu,\epsilon) : \nu \in \textit{dom}(S)\cap\mathbf{N}\}$.
    For each $a\in S$ guess a rule $\sigma\in\Sigma_b$ and a homomorphism $\theta_\sigma$ from $\textit{head}(\sigma)$ to $\varPsi(\mathbf{t})$. If, for any $a$, the labelled null $\overline{\nu}$ occurs in an existentially quantified position of $\sigma$, then mark it by updating the corresponding element in $R_s$ as $(\overline{\nu},a)$.
    Apply the resolution step  $\varPsi(\mathbf{t})\triangleright^b_{\theta_\sigma}S_a$, where $S_a=\{b_{a_1},\ldots,{b_{a_m}}\}$ are the generated atoms. If such $\sigma$ or $\theta_\sigma$ cannot be found, then reject.
    Then, with $S^+=\bigcup_{a\in S}S_a$, we let $N = \{\nu \in \textit{dom}(S^+)\cap\mathbf{N} : (\nu,x) \in R_s, x\neq \epsilon\}$ and compute $\Pi_N(S^+)=\{S_1^+,\ldots,S_n^+\}$ as the [$N$]-optimal partition of $S^+$. Intuitively, we are separating labelled nulls for which a generating atom, via its existential quantifier, has been found, while we are keeping the others linked. In this way, it is sufficient to store the generating atom along with the invented labelled nulls, to be able to safely merge different branches. Then for each $S_i\in\Pi_N(S^+)$,
     the new labelled nulls in $S_i^+$ are marked, we universally select $S\in\{S_1^+,\ldots,S_n^+\}$ and proceed as follows: if 
    $S=\{\varPsi^\prime(\mathbf{t}^\prime)\}$ and $\textit{dom}(S_i) \subseteq \textit{dom}(D)$, recursively call step~(b) for  $\varPsi^\prime(\mathbf{t}^\prime)$; else, that is, if $\textit{dom}(S)\cap\mathbf{N}\neq\emptyset$, go to step~(c).
\end{enumerate}

\smallskip\noindent The correctness of the procedure follows by construction and harmlessness: if existing, an accepting computation for $\varPsi(\mathbf{t})$ can be found by first resolving all the EGDs as they do not affect the applicability of TGDs and then, once step~(a) non-deterministically skips to~(b), by resolving the TGDs (by applying the procedure from~\cite{ArGP14}).

\smallskip
Let us briefly discuss the labelled null life cycle throughout the procedure. A fresh labelled null can be invented in resolution steps in different cases. An EGD step can introduce at most one labelled null for each variable in the body other than $x_j$; it can also introduce one more labelled null as a placeholder in the tainted position of $\varPsi^\prime$. A TGD step introduces nulls whenever an atom is resolved with a rule having body variables not appearing in the head. Nulls then propagate with resolution steps, copied from the rule heads into the resolving atoms. A null is suppressed when it is resolved via an existentially quantified variable in a rule head. Clearly, the overall assignment of the nulls must be coherent in all the universal computations. Therefore it is essential that the atoms where the null $\nu$ is invented in different parallel computations are isomorphic to $\theta(\textit{head}(\sigma))$. To ensure that, the partitioning technique keeps together within the partition that contains $\nu$, the atom $\theta(\textit{head}(\sigma))$. In this way, the different branches can be eventually merged in a sound way, with a global coherence of the nulls.

\smallskip\noindent\underline{\textit{Space Complexity of the Steps}}. We encode \textsc{EgdProofTree}($D,\Sigma_h,\Sigma_b,\Sigma_E,\varPsi(\mathbf{t})$) as 
$\textsc{EGDOnlyProofTree}(D,\Sigma_E,\varPsi(\mathbf{t}))$\\ $\bigcup_i \textsc{TGDProofTree}(\mathbf{a_i},\Sigma_h,\Sigma_b,\varPsi(\mathbf{t}))$, that is, a partial proof tree built by applying only steps~(a) followed by multiple proof trees built with the application of steps (b) and (c) on all the leaves $\mathbf{a_i}$ of the first proof tree. By harmlessness of $\Sigma_E$, this decomposition is correct as when a computation of (a) non-deterministically skips to (b), no application of step~(a) will be performed again. Intuitively, once the EGDs have been used to explain $\varPsi(\mathbf{t})$ by a set of TGD outputs, only TGDs are used in the following resolution steps, coherently with harmlessness. It has been proven that a single $\textsc{TGDProofTree}(D,\Sigma_h,\sigma_b,\mathbf{a_i})$ can be generated in polynomial time, by showing that steps (b) and (c) use $O(\textit{log}(\textit{dom}(D)))$ space at each step and remembering that that alternating logarithmic space (\textsf{ALOGSPACE}) coincides with \textsf{PTIME}~\cite{ArGP14}. It remains to show that the application of steps (a) requires polynomial time.

\smallskip We start by proving that the size of each component of a $[\emptyset]$-optimal partition of $S^+$ in $\textsc{EGDOnlyProofTree}$ 
is at most $AB(B+1)$, where $B$ is $\textit{max}_{\eta\in\Sigma_E}|\textit{body}(\eta)|$ and $A$ is the maximum arity of an atom of $\textit{body}(\eta)$. 
Intuitively, each step adds at most $B+1$ new atoms, until all the possibly tainted variables ($A$) of all the body atoms ($B$) have been resolved. At that point, each component of $\Pi_\emptyset(S_i^+)$ stably contains at most $AB(B+1)$ elements by construction. Each subsequent steps removes and re-adds at most $B+1$ atoms.

More formally, let us proceed by induction on the number of applied steps~(a). As the base case, we have that the first partitioning step produces $\Pi_\emptyset(S_0^+) = \{S^0_{\varPsi_1},\ldots,S^0_{\varPsi_n}\}$. Each $S^0_{\varPsi_i} \in \Pi_\emptyset(S^+)$ contains at most $B+1$ elements, as in the worst case, we resolve a (ground) tainted position of $\varPsi({\mathbf{t}})$ via an EGD resolution step that produces the maximum number of atoms of $\textit{body}(\eta)$, all sharing a fresh labelled null, plus one atom $\varPsi^\prime({\mathbf{t}})$, sharing the labelled null $\theta_\eta^B(x_i)$ used to replace the tainted position with one atom in $\textit{body}(\eta)$. In the inductive case, we consider a component $S^k_{\varPsi_i} \in \Pi_{\emptyset}(S^+_k)$ generated during the $k$-th step, with $k>1$ by resolving the atoms $\mathbf{a}$ in $S^{k-1}_{\varPsi_j} \in \Pi_{\emptyset}(S^+_{k-1})$. By inductive hypothesis we assume $|S^{k-1}_{\varPsi_j}| \le AB(B+1)$. 
Two cases are possible for each resolution step 
$\varPsi(\mathbf{t})\triangleright^{\pi{}}_{\theta_{\eta}}\{b_{\varPsi_1},\ldots,b_{\varPsi_n}\}$ applied on an atom $\mathbf{a^*}$ before the partitioning step $k$: (i) the variable $\mathbf{a}^*[\pi]$ is ground; (ii) $\mathbf{a}^*[\pi]$ is a labelled null $\nu$ introduced in a previous resolution step by applying $\theta_\eta^B(x_i)$, and $\pi$ is a tainted position for $\textit{body}(\eta)$. In case (i), as $\pi$ is ground, no resolution step has ever been applied on that tainted position; hence, at least one EGD $\eta$ has never been applied in a resolution step. By applying the inductive hypothesis it follows that $S^{k-1}_{\varPsi_j}$ contains at most $(B+1)(AB-1)$ atoms, and $|S^{k}_{\varPsi_i}| \le AB(B+1)$. In case~(ii), the resolution step generates at most $B+1$ new atoms in the same partition, including the atom $\mathbf{a^*}^\prime$ obtained by replacing the tainted variable $x_i$ of $\mathbf{a^*}$ in position $\pi$, with $\theta_\eta^B(x_i)$. The introduction of a fresh labelled null in $\mathbf{a^*}^\prime[\pi]$, disconnects this atom from at least $B+1$ other atoms potentially in the same partition, whence $|S^{k}_{\varPsi_i}| \le |S^{k-1}_{\varPsi_j}| \le AB(B+1)$. 

\smallskip
Having set a bound on the size of the $[\emptyset]$-optimal components of $S^+$, it is easy to see that we need to remember at most $(AB(B+1))^2$ atoms, at each step. The space needed to represent such atoms is polynomial in $\Sigma_E$ and logarithmic in $|\textit{dom}(D)|$. If $\Sigma_E$ is fixed, it follows that $\textsc{EGDOnlyProofTree}(D,\Sigma_E,\varPsi(\mathbf{t}))$ uses $O(\textit{log}|\textit{dom}(D)|)$ space at each step of its computation, and thus the proof tree can be generated in $\textsf{PTIME}$. In total, remembering that \textsf{ALOGSPACE} coincides with \textsf{PTIME}, we conclude that $\textsc{EGDOnlyProofTree}$ and so $\textsc{EGDProofTree}$ can be generated in polynomial time.
\end{proof}
\end{theorem}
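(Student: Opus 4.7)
The plan is a two-part argument. The lower bound follows because pure Datalog is a trivial subclass of Warded Datalog$^\pm$ (vacuously warded, no existentials, no affected positions) and is \textsf{PTIME}-hard. For the upper bound, under the satisfiability assumption, I would reduce BCQ answering to membership of a candidate fact $\varPsi(\mathbf{t})$ in the ground semantics $\Sigma(D)_\downarrow$: any answer of $Q$ over the chase is witnessed by a tuple of constants, and by Definition~\ref{def:harmless}, harmless EGDs cannot enable new TGD triggers, so the polynomial size bound of~\cite{ArenasGP18} on $\Sigma(D)_\downarrow$ survives. The satisfiability hypothesis itself can be discharged by running the same \textsf{PTIME} procedure on the auxiliary set $\Sigma_V$ of existential-free TGDs described just before the theorem, which is warded without EGDs.

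The core of the argument is a single alternating procedure that builds a proof tree for $\varPsi(\mathbf{t})$, extending the method of Arenas--Gottlob--Pieris~\cite{ArGP14}. After eliminating stratified negation in the standard way to obtain a positive $\Sigma^+ = \Sigma_T^+ \cup \Sigma_E^+$ over $D^+$, I would normalize $\Sigma_T^+$ into head-ground rules $\Sigma_h$ and semi-body-ground rules $\Sigma_b$ (each with at most one existential), and add a new \emph{EGD resolution step}: given an atom with a tainted position $\pi$, guess an EGD $\eta\in\chi(\pi)$ from the taint cause and unfold its body, replacing the value at $\pi$ by a fresh null. After each resolution, partition the emitted atoms into linking-optimal components ($[\emptyset]$-optimal for EGDs, $[N]$-optimal for TGDs with freshly witnessed existentials) and recurse in parallel universal branches. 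Harmlessness is what allows the factorization ``EGDs first, then independent TGD proof-trees on the resulting leaves'', because no leaf emitted by an EGD phase can enable a new TGD trigger that would in turn require further EGD work.

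The main obstacle is the space bound on the EGD phase, which is needed to land in \textsf{ALOGSPACE} $=$ \textsf{PTIME}. I would prove by induction on the number of EGD steps that each component of the $[\emptyset]$-optimal partition has cardinality at most $A\cdot B\cdot(B+1)$, where $A$ is the maximum arity of a body atom and $B=\max_{\eta\in\Sigma_E}|\textit{body}(\eta)|$, quantities independent of $D$. The delicate case is when the tainted position being resolved already contains a labelled null invented in a previous EGD step: the new resolution replaces that null by a fresh one and thereby disconnects the current atom from its former partition, so the component size does not grow. When the tainted position is still a constant, at least one EGD of $\chi(\pi)$ has not yet been consumed on that branch, and the inductive reserve absorbs the $B+1$ newly added atoms. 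Since $\Sigma_E$ is fixed in data complexity, representing such a component requires only $O(\log|\textit{dom}(D)|)$ bits, and the TGD sub-proofs inherit their logspace bound from~\cite{ArGP14}; composing the two phases yields the desired \textsf{PTIME} upper bound.
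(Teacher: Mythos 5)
Your proposal follows essentially the same route as the paper's proof sketch: the same lower bound via Datalog, the same reduction to ground-semantics membership after negation elimination, the same head-ground/semi-body-ground normalization, the same alternating proof-tree procedure with an EGD resolution step guessed from the taint cause, the same factorization into an EGD-only phase followed by TGD proof trees justified by harmlessness, and the same $AB(B+1)$ induction bounding the $[\emptyset]$-optimal components to land in \textsf{ALOGSPACE} $=$ \textsf{PTIME}. The only addition (discharging satisfiability via $\Sigma_V$) is not needed here since the theorem assumes satisfiability, and is handled separately in the paper's subsequent theorem.
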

We now show how the satisfiability problem can be reduced to CQ answering over warded TGDs and harmless EGDs where $D\cup\Sigma$ is satisfiable and provide the full complexity result.

\begin{theorem}
\label{th:complexityext}
CQ answering for Warded Datalog$^\pm$ and harmless EGDs is \textsf{PTIME}-complete in data complexity.
\begin{proof}
Let $Q$ be a query, $D$ be a database, and $\Sigma=\Sigma_T\cup\Sigma_E$ a fixed set of warded TGDs and harmless EGDs. We proceed by case distinction based on whether $D\cup\Sigma$ is satisfiable. If $D\cup\Sigma$ is satisfiable, by Theorem~\ref{th:complexity}, we can conclude that the problem of deciding $D\cup\Sigma\models Q$ is \textsf{PTIME} in data complexity. It remains to show that deciding on the satisfiability of $D\cup\Sigma$ can be done in $\textsf{PTIME}$.

\smallskip
We first give a sketch of the proof, and then describe the construction. Observe that the only way $D\cup\Sigma$ is not satisfiable is a hard violation of an EGD, i.e., equating two constants.
Intuitively, the main idea behind this proof is to separate the effects of EGDs into (i) hard violations, i.e., equating constants, and (ii) soft violations, i.e., equating labelled nulls with other values (nulls or constants).
To detect and correct soft violations, we rewrite each EGD $\eta$ into a pair of EGDs $\eta^\prime$ and $\eta^{\prime\prime}$, where we alternatively force $x_i$ and $x_j$ to bind to non-ground values. Hard violations, and thus satisfiability, are then detected iff none of the queries $Q_V$ of a set $\mathbf{Q}_V$ of check queries for hard violations hold. The encoding is done as follows.

We first construct sets $D'$ and $\Sigma'_E$, which will make up $D^\prime\cup\{\Sigma_T\cup\Sigma_E^\prime\}$ against which we will perform our queries.
We extend $D$ into $D^\prime$ by adding facts $\textit{neq}(c_1,c_2)$ for each pair of distinct constants $c_1,c_2 \in \textit{dom}(D)$. We define a new set of EGDs $\Sigma_E^\prime$, containing two EGDs, $\eta^\prime$ and $\eta^{\prime\prime}$, for each $\eta \in \Sigma_E$, built as follows. For each EGD $\eta: \boldsymbol{\phi}(\mathbf{x}) \to x_i=x_j$ of $\Sigma_E$, we let: $\eta^\prime: \boldsymbol{\phi}(\mathbf{x}),\textit{null}(x_i) \to x_i=x_j$, and
$\eta^{\prime\prime}: \boldsymbol{\phi}(\mathbf{x}),\textit{null}(x_j) \to x_i=x_j$. 
The artificial $\textit{null}$ predicate forces $x_i$ (resp.\ $x_j$) to bind only to labelled nulls. It is easy to check that assuming $D\cup\Sigma$ is satisfiable, $D\cup\Sigma_T$ is logically equivalent to $D\cup\{\Sigma_T\cup\Sigma_E^\prime\}$, as the only failing cases arise when both $x_i$ and $x_j$ are bound to constants in some EGD and $x_i\neq x_j$ (i.e., hard violations). On the other hand, such EGD activations produce no effects if $x_i=x_j$.

We now construct the set of queries $\mathbf{Q}_V$ that check for hard violations.
For each $\eta \in \Sigma_E$, we add to $\mathbf{Q}_V$ the following BCQ: 
$q\leftarrow \boldsymbol{\phi}(\mathbf{x}), \textit{neq}(x_i,x_j)$. We have that $D^\prime\cup\{\Sigma_T\cup\Sigma_E^\prime\}$ is always satisfiable, since there cannot be hard violations by construction. Moreover, as $\Sigma_E$ is harmless, $\Sigma_E^\prime$ is harmless as well. Thus, by Theorem~\ref{th:complexity}, $D^\prime\cup\{\Sigma_T\cup\Sigma_E^\prime\}\models Q_V$ can be checked in $\textsf{PTIME}$, for each $Q_V\in \mathbf{Q}_V$.
\end{proof}
\end{theorem}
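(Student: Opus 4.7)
The plan is to reduce satisfiability of $D\cup\Sigma$ under warded TGDs and harmless EGDs to CQ answering over a modified instance that is guaranteed to be satisfiable, so that Theorem~\ref{th:complexity} can be invoked both in the satisfiable branch and for the satisfiability test itself. The \textsf{PTIME} lower bound is immediate from Datalog being \textsf{PTIME}-hard, so only the upper bound requires work. A case split on whether $D\cup\Sigma$ is satisfiable handles the satisfiable branch by direct application of Theorem~\ref{th:complexity}, leaving as the sole obligation a \textsf{PTIME} satisfiability test in data complexity.

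The strategy I would pursue is to isolate \emph{hard} violations (EGD firings that equate two distinct constants, the only source of chase failure) from \emph{soft} violations that merely unify or rename labelled nulls. I would rewrite each EGD $\eta:\boldsymbol{\phi}(\mathbf{x})\to x_i=x_j$ in $\Sigma_E$ into two guarded variants $\eta'$ and $\eta''$ by conjoining an auxiliary body atom $\mathit{null}(x_i)$, respectively $\mathit{null}(x_j)$, thereby forcing at least one of the two sides to bind a labelled null at firing time. The rewritten set $\Sigma_E'$ cannot trigger a hard failure, because every firing equates a null to some other term. To expose the evidence of would-be hard violations to CQ answering, I would extend $D$ to $D'$ with facts $\mathit{neq}(c_1,c_2)$ for every pair of distinct constants in $\mathrm{dom}(D)$, and, for each $\eta\in\Sigma_E$, form the Boolean check query $Q_V^\eta : q\leftarrow \boldsymbol{\phi}(\mathbf{x}),\mathit{neq}(x_i,x_j)$, collected in a fixed-size set $\mathbf{Q}_V$.

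The correctness argument reduces to establishing that $D\cup\Sigma$ is unsatisfiable iff some $Q_V^\eta\in\mathbf{Q}_V$ holds on $D'\cup(\Sigma_T\cup\Sigma_E')$. In the forward direction, any failure of $\mathrm{chase}(D,\Sigma)$ arises from some EGD firing with distinct constants at $x_i,x_j$; since $\Sigma_E'$ excludes only those constant–constant firings and otherwise agrees with $\Sigma_E$ on soft unifications, the triggering conjunction $\boldsymbol{\phi}(\mathbf{x})$ is still derivable under $\Sigma_T\cup\Sigma_E'$, so the matching $\mathit{neq}$ fact in $D'$ witnesses $Q_V^\eta$. The reverse direction is symmetric. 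Since $|\mathbf{Q}_V|=|\Sigma_E|$ is fixed and $D'\cup(\Sigma_T\cup\Sigma_E')$ is satisfiable by construction, Theorem~\ref{th:complexity} decides each $Q_V^\eta$ in \textsf{PTIME} in $|D|$, yielding a \textsf{PTIME} satisfiability test and completing the upper bound.

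The main obstacle I anticipate is verifying that $\Sigma_E'$ is still \emph{harmless} with respect to $\Sigma_T$ and that $D'\cup(\Sigma_T\cup\Sigma_E')$ is satisfiable, both of which are prerequisites for invoking Theorem~\ref{th:complexity}. Satisfiability is straightforward, since by construction $\Sigma_E'$ never attempts to equate two constants. For harmlessness, I would argue that the added $\mathit{null}$ guard only restricts the homomorphisms under which EGDs fire, so any homomorphism from $\mathrm{chase}(D',\Sigma_T)$ onto $\mathrm{chase}(D',\Sigma_T\cup\Sigma_E)$ restricts to one onto $\mathrm{chase}(D',\Sigma_T\cup\Sigma_E')$; because $\mathit{null}$ is extensional and never appears in any TGD head, no new affected or tainted positions are introduced, so whatever witnessed harmlessness for $\Sigma_E$---in particular the safe-taintedness certificate of Theorem~\ref{th:tainted_implies_harmless}---transfers to $\Sigma_E'$ unchanged.
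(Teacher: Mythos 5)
Your proposal follows essentially the same route as the paper's proof: the same case split on satisfiability, the same rewriting of each EGD into two $\mathit{null}$-guarded variants to eliminate hard failures, the same $\mathit{neq}$-extended database, and the same per-EGD check queries decided via Theorem~\ref{th:complexity}. Your added detail on why harmlessness transfers to $\Sigma_E'$ (the $\mathit{null}$ guard introduces no new affected or tainted positions) is a welcome elaboration of a step the paper only asserts, but the argument is the same.
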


\subsection{Harmless EGDs and Warded Semantics}

Towards a full characterization of the relationship among warded semantics and harmless EGDs in CQ answering, let us consider the following set of warded TGDs.

\begin{example}
\label{ex:hje_and_egds}
\begin{align*}
  \textit{p}(x,k), p(y,k), x \neq y \rightarrow \exists z~c(z,x,y) \tag{$\sigma_1$} \\
  \textit{c}(z,x,y) \rightarrow s(z,x) \tag{$\sigma_2$} \\
  \textit{c}(z,x,y) \rightarrow s(z,y) \tag{$\sigma_3$} \\
\end{align*}
\textit{This is, again, a clustering scenario where elements $x$ and $y$ having a common feature $k$ are assumed to belong to the same cluster $z$ ($\sigma_1$). Then, $s$ contains the clusters each element belongs to ($\sigma_2, \sigma_3$). In $\sigma_1$, we pose $x\neq y$ for the sake of simplicity. $\blacksquare$}
\end{example}

Consider rules in Example~\ref{ex:hje_and_egds} with respect to database $D=\{p(1,2), p(2,2)\}$ and the atomic query $Q_1: q(z,x) \leftarrow s(z,x)$. With the warded semantics, we can have one of the following results: $\{(z=z_1,x=1), (z=z_2,x=2)\}$, $\{(z=z_1,x=1), (z=z_1,x=2)\}$, $\{(z=z_2,x=1), (z=z_1,x=2)\}$, $\{(z=z_2,x=1), (z=z_2,x=2)\}$. The specific binding depends on the particular sequence of chase steps that is applied. While all the results are equivalent under warded semantics, they say nothing about whether elements $1$ are $2$ are in the same cluster, because labelled nulls lose their identity.
Were we interested in knowing exactly whether two elements belong to the same cluster, we would resort to the CQ $Q_2: q(x,y) \leftarrow s(z,x),s(z,y)$. The CQ answer would be $\{(x=1,y=2),(x=2,y=2)\}$.

\smallskip
So, what is the impact of a harmless EGD 
(e.g., $\eta = s(z,x),s(z^{\prime},x) \to z = z^{\prime}$) unifying the cluster identifiers
in our example?
This question can be put in a more general perspective and formulated in multiple equivalent ways: are harmless EGDs compatible with the warded semantics? In a CQ with warded rules, can we control the identity of output labelled nulls via harmless EGDs? Can EGDs themselves rely on the equality of labelled nulls under warded semantics?

It turns our that the answer to these questions is negative. In Warded Datalog$^\pm$, the irrelevance of null identity is exploited to eliminate joins on labelled nulls. In the warded semantics, facts are considered equivalent modulo isomorphism. On the other hand, EGDs assign specific values to labelled nulls, invalidating this approach.

\smallskip
Consider again Example~\ref{ex:hje_and_egds} and the database $D = \{p(1,A),p(2,A),p(3,A)\}$. Figure~\ref{fig:chase} shows a portion of $\textit{chase}(D,\Sigma)$. As we have seen, we cannot rely on warded semantics and need to generate multiple isomorphic copies of the same facts, e.g., for $s(z,1)$, $s(z,2)$, $s(z,3)$. In fact, it is only from the comparison of different isomorphic copies that EGDs can enforce $z_1 = z_2 = z_3$, whilst according to warded semantics only one fact $s$ for each different value for $q[2]$ would be produced. It is however clear that not all the isomorphic copies are needed: for example, once from $s(z_1,1)$ and $s(z_3,1)$ we establish $z_1=z_3$ and from $s(z_1,2)$ and $s(z_2,2)$ we establish that $z_1=z_2$, there is no need to duplicate $s(z_2,3)$, as $z_3=z_2$ and $z_3 = z_1$ hold by transitivity. 

\begin{figure}

\centering
\includegraphics[scale = 0.20]{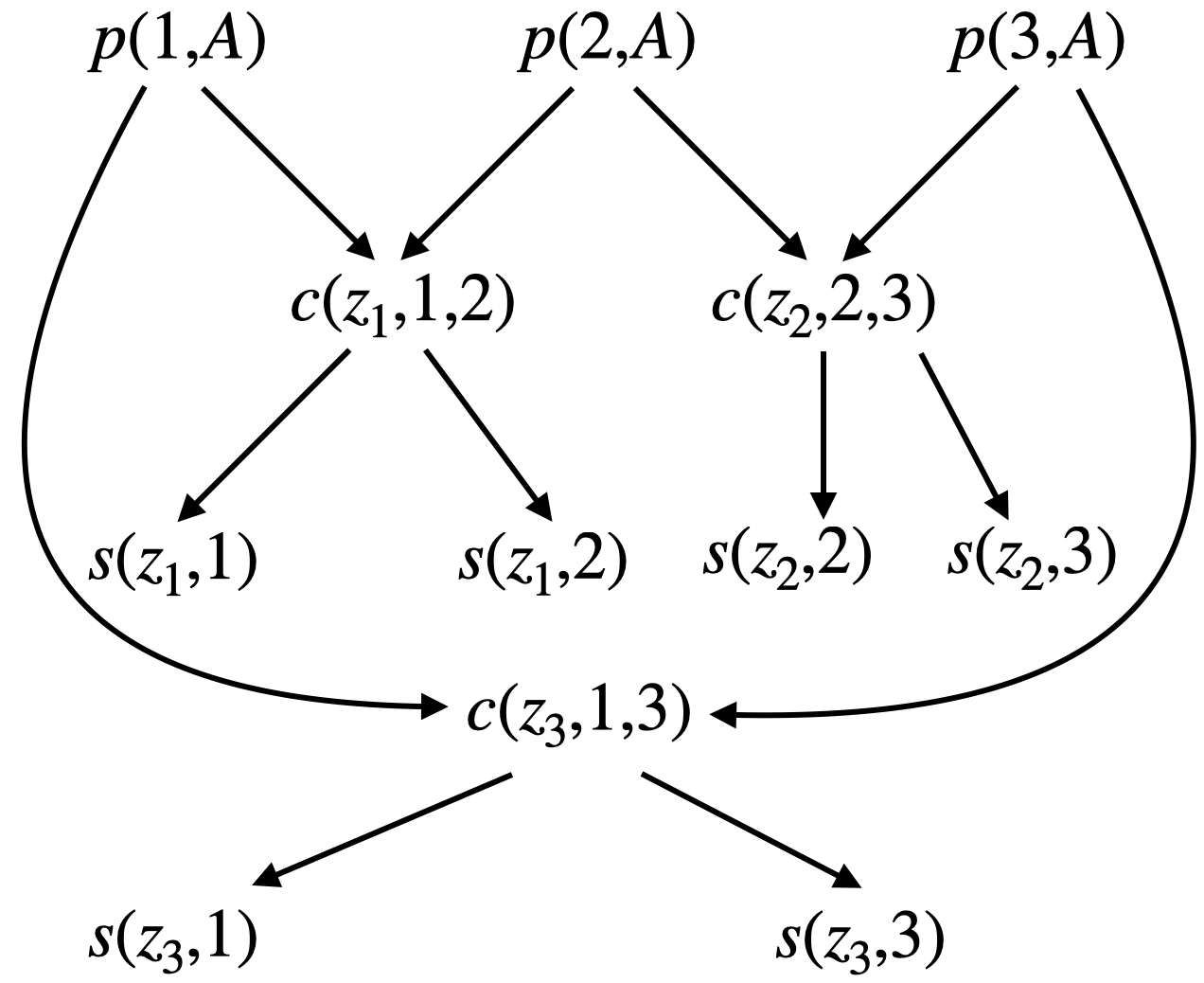}
\centering
\caption{Chase application for Example~\ref{ex:hje_and_egds}.}
\label{fig:chase}
\end{figure}

\smallskip
In Theorem~\ref{th:decidability}, we leveraged the universal semantics to build a finite and CQ-equivalent $\textit{chase}^H(D,\Sigma) \subseteq \textit{chase}(D,\Sigma)$, where EGDs are applied to fixpoint. As we cannot choose $\textit{chase}^W$ to provide an efficient implementation of $\textit{chase}^B$ (used in $\textit{chase}^H)$, it is our goal in this section to define a \textit{relaxed warded semantics}  $\textit{chase}^{B_W}(D,\Sigma) \subseteq \textit{chase}(D,\Sigma)$, that is a finite restriction of $\textit{chase}(D,\Sigma)$, CQ-equivalent to it even in the presence of harmless EGDs.

Figure~\ref{fig:semantics} summarizes the described setting, highlighting the different EGD classes, whereas containment relationships between the semantics (and the respective chase variants) are in Figure~\ref{fig:chases}.

\begin{figure}[h]
\begin{subfigure}{.5\textwidth}
\centering
\includegraphics[scale = 0.30]{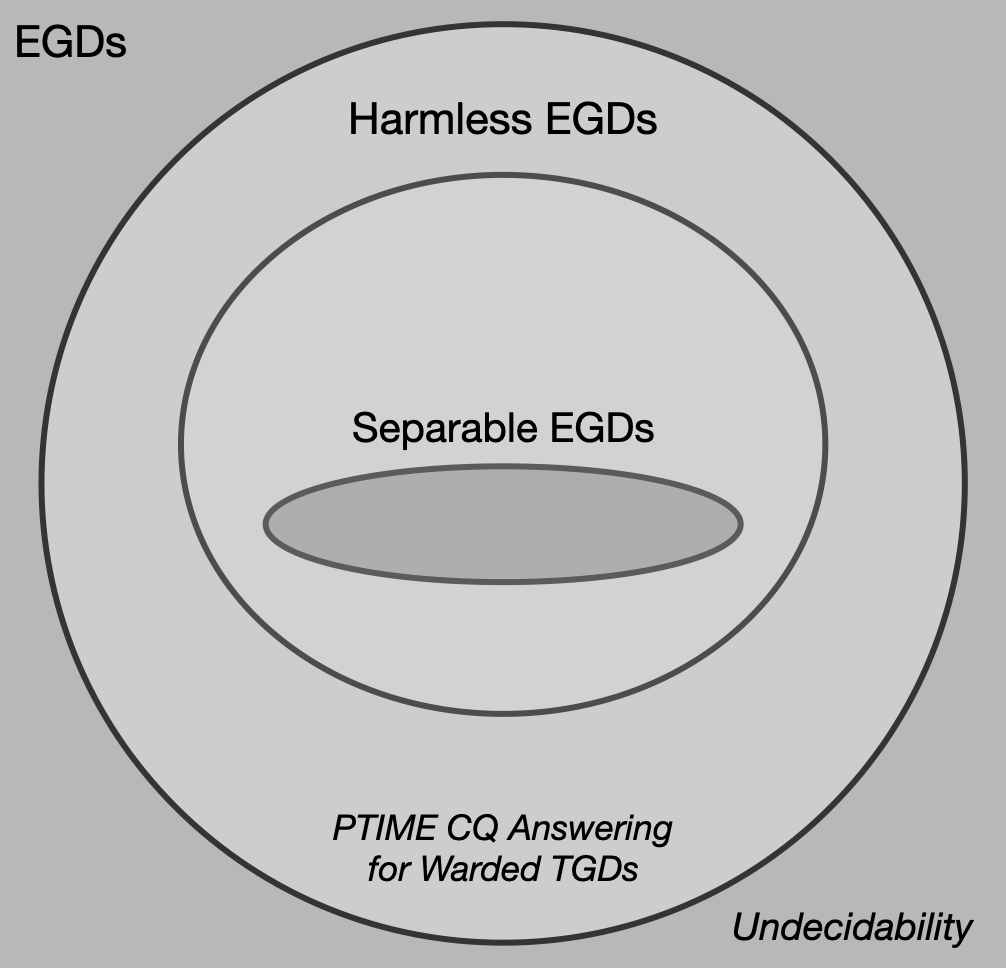}
\caption{Syntactic containment of EGD classes.}
\label{fig:semantics}
\end{subfigure}%
\begin{subfigure}{.5\textwidth}
\centering
\includegraphics[scale = 0.22]{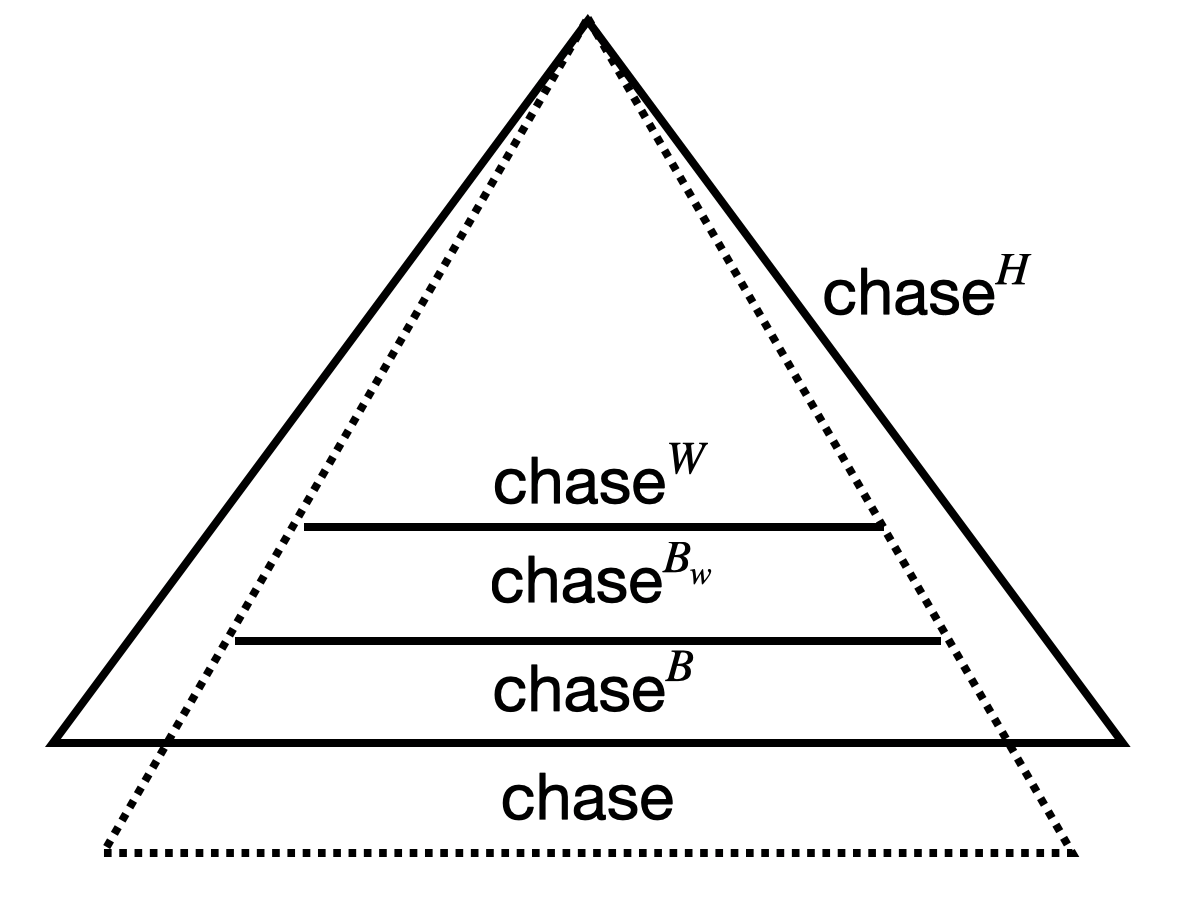}
\caption{Containment of the involved semantics: (1)~$\textit{chase}^W$: warded semantics; (2)~$\textit{chase}^{B_W}$: relaxed warded semantics, an extension of $\textit{chase}^W$, where multiple isomorphic copies are allowed to enable the application of EGDs; (3)~$\textit{chase}^{B}$: universal semantics of Warded Datalog$^\pm$; (4)~$\textit{chase}^H$ is a finite portion of the full chase, CQ-equivalent to the full infinite chase in the presence of harmless EGDs; (5)~full potentially infinite chase.}
\label{fig:chases}
\end{subfigure}
\label{fig:myfig}
\caption{~}
\end{figure}

\medskip
Towards a characterization of $\textit{chase}^{B_W}$, we need to introduce some more of our theoretical tools.

\medskip
The \textit{chase graph} $\mathcal{G}(\Sigma,D)$ for a database $D$ and a set of dependencies $\Sigma$ is the directed graph consisting of $\textit{chase}(D,\Sigma)$ as the set of nodes and having an edge from \textbf{a} to \textbf{b} iff \textbf{b} is obtained from \textbf{a} and possibly other facts by the application of a TGD of $\Sigma$. The \textit{warded forest} $\mathcal{W}(\mathcal{G})$ of a chase graph is the the subgraph that consists of all nodes of the chase graphs, all edges of the chase graph that correspond to the application of linear rules (i.e., having one single body atom), and one edge for each nonlinear rule ---namely the one from the fact bound to the ward~\cite{BeSG18}. The connected components of a warded forest are determined by the joins that involve constants, in fact, each component contains only edges representing linear rules or joins involving dangerous variables. The $\textit{subtree}(\mathcal{W},\textbf{v})$ is the subtree of $\mathcal{W}(\mathcal{G}(\Sigma,D))$ induced by all the descendants of $\textbf{v}$.
The $\textit{subgraph}(\mathcal{G},\textbf{v})$ is the subgraph of $\mathcal{G}(D,\Sigma)$ induced by all the descendants of $\textbf{v}$.

\begin{definition}
\label{def:track}
\textit{Let $\Sigma$ be a set of warded TGDs, $D$ a database and $\textbf{a}$ a fact of $\textit{chase}(D,\Sigma)$. Let $d(x,y)$ be the graph distance between two facts $x$ and $y$ of $\mathcal{G}(\Sigma,D)$ (we assume $d(x,x)=0$). We define $\mathbf{b} = \textbf{track}(\textbf{a})$ as the $d(\mathbf{b},\mathbf{a})$-maximal fact $\mathbf{b}$ of $\mathcal{G}$ such that $\mathbf{a} \in \textit{subtree}(\mathcal{W},\mathbf{b})$. Intuitively, $\mathbf{b}$ is the ultimate root of the tree to which $\mathbf{a}$ belongs in $\mathcal{W}(\mathcal{G})$.}
\end{definition}

\begin{definition}
\label{def:track-isomorphism}
Let $\Sigma$ be a set of warded TGDs, $D$ a database, and $T$ a fact of the chase graph $\mathcal{G}(\Sigma,D)$. We say that 
two facts $\textbf{a}$ and $\textbf{b}$ are \emph{$T$-isomorphic},
if they are isomorphic and have the same track $T=\textbf{track}(\textbf{a}) = \textbf{track}(\textbf{b})$.
\end{definition}

Definition~\ref{def:track-isomorphism} restricts the general notion of isomorphism between facts and applies it only when they actually originate from the same root of the warded forest. Intuitively, for two isomorphic facts deriving from different tracks, we do not make any assumption on the equality of the involved labelled nulls. Conversely, for two isomorphic facts having the same track, the value of their labelled nulls is actually encoding the same information. 
For example, in Figure~\ref{fig:chase}, facts $s(z_1,1)$ and $s(z_3,1)$  are isomorphic but not $T$-isomorphic, as: $\textbf{track}(s(z_1,1)) = c(z_1,1,2)$ and $\textbf{track}(s(z_3,1)) = c(z_3,1,3)$.

\begin{theorem}
\label{th:t-isomoprhism_implies_subtree_isomorphism}
Let $\Sigma$ be a set of warded TGDs, $D$ be a database. If two facts \textbf{a} and \textbf{b} of $\mathcal{G}(\Sigma,D)$ are T-isomorphic, then $\textit{subgraph}(\mathcal{W}(\mathcal{G}),\textbf{a})$ and $\textit{subgraph}(\mathcal{W}(\mathcal{G},\textbf{b}))$ are isomorphic, according to the standard notion of graph isomorphism.
\begin{proof}
It can be easily proven as a special case of Theorem~2 in~\cite{BeSG18} which holds in the more general case of \textbf{a} and \textbf{b} being isomorphic, given that by Definition~\ref{def:track-isomorphism}, T-isomorphism necessitates isomorphism.
\end{proof}
\end{theorem}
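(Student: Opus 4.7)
The plan is to bypass a direct inductive construction and reduce the claim to the analogous statement proved for general isomorphic facts in the warded chase graph, namely Theorem~2 of~\cite{BeSG18}. The cited theorem establishes that if two facts $\mathbf{a}$ and $\mathbf{b}$ of the chase graph are isomorphic in the standard sense, then their induced descendant subgraphs in the warded forest are isomorphic, with the isomorphism extending the term-level bijection between $\mathbf{a}$ and $\mathbf{b}$. Since, by Definition~\ref{def:track-isomorphism}, $T$-isomorphism is strictly stronger than ordinary isomorphism---it adds the requirement $\textbf{track}(\mathbf{a}) = \textbf{track}(\mathbf{b}) = T$ on top of it---the hypothesis of the cited result is immediately satisfied.

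Concretely, I would first unpack Definition~\ref{def:track-isomorphism} to extract the substitution $h$ witnessing the isomorphism of $\mathbf{a}$ and $\mathbf{b}$ at the level of terms, and note that $\mathbf{a}$ and $\mathbf{b}$ lie in the same tree of $\mathcal{W}(\mathcal{G})$ rooted at $T$. Then I would invoke~\cite[Theorem~2]{BeSG18} to obtain an isomorphism $\Phi$ between $\textit{subgraph}(\mathcal{W}(\mathcal{G}),\mathbf{a})$ and $\textit{subgraph}(\mathcal{W}(\mathcal{G}),\mathbf{b})$ extending $h$. The intuition behind the cited result is that the warded forest keeps, for each nonlinear rule application, exactly the edge from the warding atom; consequently, the descendant structure below any fact is fully determined up to renaming of labelled nulls by the fact itself, since any additional side atom used in a nonlinear application can only contribute harmless positions (constants or values that do not propagate nulls), by wardedness.

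The main obstacle, and the reason~\cite{BeSG18} treats the general case rather than this specialization, is showing that the inductive extension of $h$ across nonlinear rule applications remains well-defined: one must verify that when a child fact of $\mathbf{a}$ is generated by a rule whose ward is a linear ancestor and whose non-warding body atoms lie outside $\textit{subgraph}(\mathcal{W}(\mathcal{G}),\mathbf{a})$, the same non-warding atoms can be matched on the $\mathbf{b}$-side. Wardedness guarantees this, because those atoms connect only through harmless variables and thus their availability depends solely on constants in $D$, which are shared. Once this is granted in the general case, no additional work is required in the $T$-isomorphic case: the extra track condition only strengthens the hypothesis, so the conclusion follows as a direct corollary, exactly as the statement claims.
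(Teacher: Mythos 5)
Your proposal takes exactly the same route as the paper: the paper's proof is a one-line reduction observing that $T$-isomorphism implies ordinary isomorphism (Definition~\ref{def:track-isomorphism}), so the claim follows as a special case of Theorem~2 of~\cite{BeSG18}. Your additional unpacking of why the cited theorem holds is consistent with that argument and does not change the approach.
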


Standard isomorphism check already suggests an effective algorithm to build $\textit{chase}^W(D,\Sigma_T)$, which guarantees correctness of CQ answering. It is used in the {\sc Vadalog} system and, roughly, consists in pruning the exploration of a chase graph branch, whenever a fact is met that is isomorphic to an already generated one. 
Theorem~\ref{th:t-isomoprhism_implies_subtree_isomorphism} puts a more stringent notion of isomorphism into action in sets of warded rules. 
We claim that the same chase pruning algorithm can be applied on the basis of $T$-isomorphism: as it is a stricter notion, it will give rise to a new chase variant $\textit{chase}^{B_W}(D,\Sigma) \supseteq \textit{chase}^{W}(D,\Sigma)$, which will prove to preserve enough information about labelled nulls to enable CQ answering with harmless EGDs. Clearly, $T$-isomorphism is an equivalence relation and with the next definition, we see how it can be used to declaratively describe the structure of this new chase variant.

\begin{definition}
\label{def:chaseBW}
\textit{Given a database $D$, a set of warded rules $\Sigma$, let $\mathbf{Q}$ be the quotient set $\textit{chase}(D,\Sigma)/\mathcal{T}$, induced by the $T$-isomorphism relation $\mathcal{T}$ on the standard chase. We define the relaxed warded semantics $\textit{chase}^{B_W}(D,\Sigma)$ as the set of all the class representatives of $\mathbf{Q}$, one for each equivalence class of $\mathbf{Q}$.}
\end{definition}

Definition~\ref{def:chaseBW} allows to partition the infinite chase  $\textit{chase}(D,\Sigma)$ into a finite number of equivalence classes, so that CQ answering in the presence of harmless EGDs can be performed on them. First, we need to discuss the boundedness of $\textit{chase}^{B_W}(D,\Sigma)$ and then to argue for the correctness of CQ answering with EGDs.

\begin{theorem}
\label{th:chaseBW-boundedness} Let $\mathbf{S}$ be a database schema and $w$ the maximal arity of its predicates. Given a database $D$ and a set of warded TGDs $\Sigma$, both defined for $\mathbf{S}$, let $P$ be the set of pairs $\langle T, \textbf{a} \rangle$ where $T = \textbf{track}(\textbf{a}$). There is a constant $\delta$ depending on $\mathbf{S}$, $\text{dom}(D)$ and $w$, such that if $\vert P \vert>\delta$, then $P$ contains at least two T-isomorphic facts.
\begin{proof}
In each tree of $\mathcal{W}(D,\Sigma)$, facts can be constructed by permuting at most $w + \text{dom}(D)$ terms (at most $w$ new labelled nulls and $\text{dom}(D)$ possible constants) over $w$ positions of $\vert \mathbf{S} \vert$ facts. For each tree we have at most $\vert \mathbf{S} \vert (w + \text{dom}(D))^w$ non $T$-isomorphic facts. Tracks are ${\vert \mathbf{S} \vert (w + \text{dom}(D))}\choose{2}$, therefore $\delta$ = ${\vert \mathbf{S} \vert (w + \text{dom}(D))}$$ {\vert \mathbf{S} \vert (w + \text{dom}(D))}\choose{2}$.
\end{proof}
\end{theorem}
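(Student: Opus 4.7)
The plan is a straightforward pigeonhole argument applied to the two-dimensional space of pairs $\langle T, \mathbf{a}\rangle$. I would first establish upper bounds on (i) the number of possible tracks $T$ that can appear, and (ii) the number of non-isomorphic facts that can share a given track, and then multiply these bounds to obtain $\delta$.

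For (ii), the key observation is that within a single tree of the warded forest rooted at $T$, every fact has arity at most $w$, and up to isomorphism of labelled nulls, each of its positions is filled either by a constant from $\text{dom}(D)$ or by one of at most $w$ distinct labelled nulls that occur in the fact itself. This gives at most $|\mathbf{S}|\cdot(w+|\text{dom}(D)|)^w$ facts up to standard isomorphism per tree, by multiplying predicate choices by the possible ways to fill $w$ positions. For (i), tracks are themselves facts (roots of the warded-forest trees), and I can bound their number by the same kind of combinatorial count on possible root facts, yielding a bound of the form stated in the theorem's proof.

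Setting $\delta$ equal to the product of these two bounds, I would then argue: if $|P|>\delta$, then by the pigeonhole principle there must exist two distinct pairs $\langle T, \mathbf{a}\rangle, \langle T', \mathbf{b}\rangle \in P$ with $T=T'$ and $\mathbf{a},\mathbf{b}$ isomorphic as facts. By Definition~\ref{def:track-isomorphism}, this is precisely the condition for $\mathbf{a}$ and $\mathbf{b}$ to be $T$-isomorphic, concluding the argument.

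The main obstacle I anticipate is making the fact-counting argument per tree fully rigorous: I need to justify that counting ``up to standard isomorphism'' (i.e., treating the labelled nulls as an interchangeable pool of at most $w$ symbols per fact) indeed gives the correct bound on non-$T$-isomorphic facts within the same track, without double-counting or missing cases arising from how nulls are shared across sibling facts in the same tree. Once that local combinatorial bound is pinned down, the rest of the argument is a purely mechanical application of pigeonhole.
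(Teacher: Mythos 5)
Your proposal is correct and follows essentially the same route as the paper's own proof: bound the number of non-$T$-isomorphic facts per warded-forest tree by $\vert \mathbf{S}\vert(w+\vert\text{dom}(D)\vert)^w$, bound the number of possible tracks, take $\delta$ as the product, and apply the pigeonhole principle together with Definition~\ref{def:track-isomorphism} (isomorphic plus same track equals $T$-isomorphic). The only difference is presentational: you make the pigeonhole step and the reduction to standard isomorphism within a fixed track explicit, whereas the paper states the counts directly.
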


We are now ready to discuss an implementation of $\textit{chase}^H$ that uses our relaxed warded semantics. Hence, by redefining  $\textit{chase}^H(D,\Sigma)$ as $\textit{chase}(\textit{chase}^{B_W}(D,\Sigma_T),\Sigma_E)$, we prove the following result.

\begin{theorem}
\label{th:chaseBW-CQ} 
Given a set $\Sigma = \Sigma_T \cup \Sigma_E$ of warded TGDs and harmless EGDs, and a database $D$, if $D\cup\Sigma$ is satisfiable, for every query $Q$, it holds $\textit{chase}(D,\Sigma)\models Q$ iff
$\textit{chase}(\textit{chase}^{B_W}(D,\Sigma_T),\Sigma_E)\models Q$. 
\end{theorem}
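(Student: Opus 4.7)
The plan is to leverage Theorem~\ref{th:decidability}, which already establishes the CQ-equivalence between $\textit{chase}(D,\Sigma)$ and $\textit{chase}(\textit{chase}^B(D,\Sigma_T),\Sigma_E)$ for a universal semantics $\textit{chase}^B$. The remaining burden is then to show that the relaxed warded semantics $\textit{chase}^{B_W}(D,\Sigma_T)$ is an admissible concrete instance of $\textit{chase}^B(D,\Sigma_T)$, i.e., that
\[
\textit{chase}(\textit{chase}^B(D,\Sigma_T),\Sigma_E) \models Q \;\Leftrightarrow\; \textit{chase}(\textit{chase}^{B_W}(D,\Sigma_T),\Sigma_E) \models Q.
\]
I would prove the two directions separately. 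The $(\Leftarrow)$ direction is straightforward: by Definition~\ref{def:chaseBW}, $\textit{chase}^{B_W}(D,\Sigma_T) \subseteq \textit{chase}(D,\Sigma_T)$, and by Theorem~\ref{th:decidability} the latter is CQ-equivalent to $\textit{chase}(D,\Sigma)$; any homomorphism from $Q$ into $\textit{chase}(\textit{chase}^{B_W}(D,\Sigma_T),\Sigma_E)$ can therefore be lifted to a homomorphism into $\textit{chase}(D,\Sigma)$ by composing with the inclusion and with the EGD-induced substitutions.

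For $(\Rightarrow)$, I would build a homomorphism $g$ from $\textit{chase}^B(D,\Sigma_T)$ to $\textit{chase}^{B_W}(D,\Sigma_T)$ that maps each fact to the representative of its $T$-isomorphism class. The key technical ingredient is Theorem~\ref{th:t-isomoprhism_implies_subtree_isomorphism}: if $\textbf{a}$ and $\textbf{b}$ are $T$-isomorphic, then the subgraphs of the warded forest rooted at them are isomorphic under a consistent renaming of the labelled nulls introduced below them. Hence $g$ can be defined fact-by-fact in a way that commutes with TGD chase steps, so that for every TGD derivation $\boldsymbol{\varphi}(\mathbf{x}) \xrightarrow{\sigma_\theta} \boldsymbol{\varPsi}(\mathbf{y},\mathbf{z})$ in $\textit{chase}^B(D,\Sigma_T)$ the image $g(\boldsymbol{\varphi}(\mathbf{x})) \xrightarrow{\sigma_{g\circ\theta}} g(\boldsymbol{\varPsi}(\mathbf{y},\mathbf{z}))$ appears in $\textit{chase}^{B_W}(D,\Sigma_T)$, modulo renaming of the freshly invented labelled nulls.

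Next, I would lift $g$ through the EGD chase: each EGD step in $\textit{chase}(\textit{chase}^B(D,\Sigma_T),\Sigma_E)$ is witnessed by a homomorphism $\theta$ from the EGD body into $\textit{chase}^B(D,\Sigma_T)$; the composition $g\circ\theta$ witnesses an analogous EGD step on the representative side, and by harmlessness of $\Sigma_E$ (Definition~\ref{def:harmless}) the resulting substitution of labelled nulls on the $\textit{chase}^{B_W}$ side is compatible with $g$. An inductive argument on the number of chase steps---mirroring the structure of the proof of Theorem~\ref{th:tainted_implies_harmless}, cases (a) and (b)---then yields a homomorphism $g^\ast$ from $\textit{chase}(\textit{chase}^B(D,\Sigma_T),\Sigma_E)$ onto $\textit{chase}(\textit{chase}^{B_W}(D,\Sigma_T),\Sigma_E)$ whose existence, in particular surjectivity onto the representatives, is exactly what is needed to transfer a CQ match $\theta_Q$ from the former to the latter as $g^\ast\circ\theta_Q$.

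The main obstacle I expect is ensuring that the EGD steps on $\textit{chase}^{B_W}$ do not ``miss'' equalities that the full chase would enforce between labelled nulls coming from different $T$-isomorphism classes. The subtlety is that although $\textit{chase}^{B_W}$ keeps only one representative per class, an EGD may fire on a join of two facts whose participating labelled nulls originate in distinct representatives and whose unification in the full chase depended on a particular isomorphic copy that has been pruned. Here Theorem~\ref{th:t-isomoprhism_implies_subtree_isomorphism} is crucial: because $T$-isomorphism preserves the whole warded-forest subgraph beneath the representative, every EGD trigger available in a pruned copy is realizable on the kept representative through an equivalent homomorphism, so $g^\ast$ remains well-defined and surjective, and the iff of the theorem follows.
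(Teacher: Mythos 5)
Your proposal is correct and follows essentially the same route as the paper: both reduce the statement to Theorem~\ref{th:decidability} and then establish CQ-equivalence between the universal semantics $\textit{chase}^B$ and the relaxed warded semantics $\textit{chase}^{B_W}$, with Theorem~\ref{th:t-isomoprhism_implies_subtree_isomorphism} as the key tool for redirecting any match that uses a pruned $T$-isomorphic copy onto the retained representative (and with the reverse direction following from containment and monotonicity). The only cosmetic difference is that the paper proves the equivalence at the level of the TGD-only chases ($\textit{chase}^B(D,\Sigma_T)\models Q$ iff $\textit{chase}^{B_W}(D,\Sigma_T)\models Q$) and then re-runs the proof of Theorem~\ref{th:decidability} with $\textit{chase}^{B_W}$ substituted for $\textit{chase}^B$, whereas you push the representative-choosing homomorphism $g$ through the EGD chase explicitly; the underlying argument is the same.
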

\begin{proof}
It is sufficient to prove that (a)~$\textit{chase}^B(D,\Sigma_T) \models Q$ iff $\textit{chase}^{B_W}(D,\Sigma_T) \models Q$ and then apply the same proof as Theorem~\ref{th:decidability}, where $\textit{chase}^{B_W}(D,\Sigma)$ is used in the place of $\textit{chase}^{B}(D,\Sigma)$.
We prove the two directions of the implication.

\smallskip\noindent
($\Rightarrow$) We need to show that for every homomorphism $h$ from $Q$ to $\textit{chase}^B(D,\Sigma)$, there is a corresponding homomorphism from $Q$ to $\textit{chase}^{B_W}(D,\Sigma)$. Consider a conjunction of facts $\boldsymbol{\varphi}(\mathbf{x})$ in $\textit{chase}^B(D,\Sigma)$.
If $\boldsymbol{\varphi}(\mathbf{x})$ does not contain any pair of $T$-isomorphic facts, then $\boldsymbol{\varphi}(\mathbf{x}) \in \textit{chase}^{B_W}(D,\Sigma)$. The only case in which $\boldsymbol{\varphi}(\mathbf{x}) \not\in \textit{chase}^{B_W}(D,\Sigma)$ is when $\boldsymbol{\varphi}(\mathbf{x})$ contains two T-isomorphic facts $\varphi_i,\varphi_j$ and so only one of them, e.g., $\varphi_i$ is in $\textit{chase}^{B_W}(D,\Sigma)$. In this case, we argue that for every homomorphism from $Q$ to $\boldsymbol{\varphi}(\mathbf{x})$, there is a homomorphism from $Q$ to $\boldsymbol{\varphi}^\prime(\mathbf{x}) = \boldsymbol{\varphi}(\mathbf{x}) \{\varphi_j / \varphi_i\}$
($\boldsymbol{\varphi}^\prime(\mathbf{x})$ is obtained from $\boldsymbol{\varphi}(\mathbf{x})$ by replacing $\varphi_j$ with $\varphi_i$). Let us proceed by contradiction and assume that there does not exist a homomorphism from $Q$ to $\boldsymbol{\varphi}^\prime(\mathbf{x})$. Since $\varphi_i$ and $\varphi_j$ are $T$-isomorphic, they differ at most by some labelled nulls $z_i \neq z_j$, in corresponding positions. In order for $z_i$ and $z_j$ to prevent the existence of homomorphisms from $Q$ to $\boldsymbol{\varphi}^\prime(\mathbf{x})$, $z_i$ and $z_j$ must appear also in other facts of $\boldsymbol{\varphi}(\mathbf{x})$ and $\boldsymbol{\varphi}^\prime(\mathbf{x})$. Let us suppose there is a fact $\varphi_k$ of $\boldsymbol{\varphi}(\mathbf{x})$ where $z_i$ appears in position $\varphi_k[i]$ and assume there is a corresponding fact $\varphi_k^\prime$ of $\boldsymbol{\varphi}^\prime(\mathbf{x})$ s.t.\ $z_j$ is not in position $\phi_k^\prime[i]$. Since $\varphi_i$ and $\varphi_k$, $\varphi_j$ and $\varphi_k^\prime$ share a labelled null, respectively, by wardedness it must be that either $\varphi_k$ resp.\ $\varphi_k^\prime$ is a predecessor of $\varphi_i$ resp.\ $\varphi_j$ in the chase, or it is a successor. In the former case, as $\varphi_i$ and $\varphi_j$ are $T$-isomorphic ($\textbf{track}(\varphi_i)=\textbf{track}(\varphi_j)$), by Theorem~\ref{th:t-isomoprhism_implies_subtree_isomorphism}, $\varphi_k$ and $\varphi_k^\prime$ must be T-isomorphic, which contradicts our hypothesis that $z_i$ appears in position $\phi_k[i]$ and $z_j$ is not in position $\phi_k^\prime[i]$. In the latter case, from $T$-isomorphism of $\varphi_i$ and $\varphi_j$, we have that $\textit{subgraph}(\mathcal{G},\varphi_i)$ is isomorphic to $\textit{subgraph}(\mathcal{G},\varphi_j^\prime)$, and thus $\varphi_k$ and $\varphi_k^\prime$ are $T$-isomorphic ($\mathcal{G}$ is the common chase graph), which also contradicts our hypothesis. Thus we conclude that if there exists a homomorphism from $Q$ to $\boldsymbol{\varphi}(\mathbf{x})$ then there exists a homomorphism from $Q$ to $\boldsymbol{\varphi}^\prime(\mathbf{x})$.

\noindent($\Leftarrow$) It directly descends from the hypothesis that $\textit{chase}^{B_W}(D,\Sigma)\subseteq  \textit{chase}^B(D,\Sigma) $ and the chase monotonicity.
\end{proof}

\subsection{Reasoning Algorithms in the Presence of EGDs}
\label{sec:algorithms}

Decidability guarantees given by Theorem~\ref{th:decidability} are already enough to safely apply the standard EGD chase~\cite{FKMP05}, where for each chase step enforcing a TGDs, all EGDs are applied to fixpoint. For every fact that is generated by a TGD, a set of unifications potentially arise and all of them are applied. This is helpful when the interaction between TGDs and EGDs is not known beforehand and EGDs are applied to enable other TGDs to fire.
On the other hand, harmless EGDs and warded TGDs allow for a smarter and approach. First a full $\textit{chase}^{B_W}(D,\Sigma)$ is computed, with the aid of a space-efficient support structure to assess $T$-isomorphism and thus implement an aggressive recursion control to generate the chase; then, EGDs can be applied to fixpoint. Finally, CQs can be answered on the obtained result.
Towards an implementation of harmless EGDs in the {\sc Vadalog} system, we propose an algorithm that practically implements $\textit{chase}^H(D,\Sigma)$. 

\smallskip\noindent\textbf{Algorithm.} The main idea of the algorithm consists in first computing $\textit{chase}^{B_W}(B,\Sigma)$ and then expanding it by applying the EGDs of $\Sigma_E$ to fixpoint so as to enable CQ answering. 
To construct $\textit{chase}^{B_W}(B,\Sigma)$, we adopt a variant of the usual restricted chase procedure, where applicability is conditioned to the success of a \textit{termination strategy}, a function which ensures that for a class of equivalent facts modulo $T$-isomorphism, we generate only one copy.

\begin{definition}
\label{def:termination_strategy}
\textit{Given a database $D$, a set of warded rules $\Sigma$, and a binary relation $R$ between facts, a \emph{termination strategy} $\mathcal{T}_R$ is a function $\mathcal{T}_R: \textit{chase}^k(D,\Sigma) \to \{0,1\}$, where $k$ denotes the chase generated until the $k$-th chase step, such that $\mathcal{T}_R(f) = {1}$ if there is no fact $f^\prime \in \textit{chase}^{k-1}(D,\Sigma)$ with $fRf^\prime$, 0 otherwise.}
\end{definition}

With Definition~\ref{def:termination_strategy} in place, we can constructively define $\textit{chase}^{B_W}(D,\Sigma)$ as follows. Starting with $\textit{chase}^0 = D$ at a given chase step $k$, a TGD $\boldsymbol{\phi}_{k-1}(\mathbf{x}) \to \exists\mathbf{z}~\boldsymbol{\psi}(\mathbf{y},\mathbf{z})$
is applied if there is a homomorphism $\theta$ such that $\theta(\boldsymbol{\phi}) \subseteq \textit{chase}^{k-1}(D,\Sigma)$ and $\mathcal{T}_R(\theta(\boldsymbol{\phi}))=1$, where $R$ is $T$-isomorphism of facts.

\begin{enumerate}
    \item build $\textit{chase}^{B_W}(D,\Sigma_T)$ as $\mathcal{M}$, by applying the chase restricted by $\mathcal{T}_R$, where $R$ is $T$-isomorphism of facts
    \item initialize $\mathcal{G} = \langle \mathbf{N},\mathbf{E} \rangle$ as an empty graph 
    \item  until the following joint conditions hold: (a) there is a an EGD $\eta = \boldsymbol{\phi}(\mathbf{x}) \to x_i = x_j$ of $\Sigma_E$ such that there exists a homomorphism $\theta$ with $\theta(\boldsymbol{\phi}) \subseteq \mathcal{M}$ and $\theta(x_i)\neq\theta(x_j)$; (b) $\textit{memory\_size}(\mathcal{G})<{\rm threshold}$
        \begin{itemize}
            \item let $\mathbf{P} = \{\langle \theta(x_i), \theta(x_j) \rangle\}$ be the set of pairs s.t.\ $\theta(\boldsymbol{\phi}) \subseteq \mathcal{M}$ and $\theta(x_i)\neq\theta(x_j)$
            \item for each pair $\langle \mathbf{a},\mathbf{b} \rangle \in \mathbf{P}$
            \begin{itemize}
            \item if both $\mathbf{a}$ and $\mathbf{b}$ are constants then \textbf{fail}
            \item if $\mathbf{a}$ (resp.\ $\mathbf{b}$) is a labelled null and until $\textit{memory\_size}(\mathcal{G})<{\rm threshold}$
            \begin{itemize}
            \item[-] add nodes $\mathbf{a}$ and $\mathbf{b}$ to $\mathbf{N}$ (unless already included) and add edge $\langle \mathbf{a},\mathbf{b} \rangle$ to $\mathbf{E}$
            \end{itemize}
            \end{itemize}
        \end{itemize}
    \item for each $CC \in {\textit{connected\_components}}({\mathcal{G}})$, if there are two constant nodes $\mathbf{a}\neq\mathbf{b}\in\textit{CC}$, then fail; if there is a constant node $\mathbf{a} \in CC$, then assign the value of $\mathbf{a}$ to all the labelled null nodes; else, assign all the nodes to an arbitrary labelled null corresponding to a node in $CC$.
    \item apply all the assignments in $\mathcal{G}$ to $\mathcal{M}$.
\end{enumerate}

\noindent\textbf{Discussion}. Arguments about correctness of the algorithm are quite intuitive: step~(1) builds $\textit{chase}^{B_W}(D,\Sigma)$ by enforcing the $T$-isomorphism equivalence relation; in step~(3), EGDs are applied to fixpoint and bindings are simulated by edges of $\mathcal{G}$. Since by harmlessness, no EGD can reactivate any TGD, $\textit{chase}^{B_W}(D,\Sigma)$ is finite by definition as well as the number of possible unifications applied by (4) in $\mathcal{G}$. We are therefore guaranteed that the algorithm terminates. Throughout the construction of $\mathcal{G}$, labelled nulls are progressively unified until all the ones that are directly or indirectly connected by the EGDs take the same value. The addition of new edges in the graph introduces new bindings but never alters the existing ones, unless any contradiction arises resulting in a hard EGD violation.
The bottleneck of the algorithm lies in computing the initial chase, which dominates the EGD unification.
Because EGD unification can be executed in constant time on polynomially many processors, depending on the specific subfragment of Datalog$^{\pm}$ in which TGDs of $\Sigma_T$ are specified, the algorithm may enjoy high parallelizability. For instance, while Warded Datalog$^{\pm}$ is not highly paralellizable, it is the case for an interesting and quite expressive variant of it, namely \emph{Piecewise Linear Warded Datalog$^{\pm}$}~\cite{BGPS19}, for which CQ answering is in $\textsf{NL}$ in data complexity.

\smallskip
The algorithm guarantees limited memory footprint. Line~(3) needs to evaluate the body of an EGD $\eta$ against $\textit{chase}^{B_W}(D,\Sigma)$, which has been pre-computed at previous steps. This can be done in constant space, by allocating a minimum of one memory block for each of the involved relations.  Since the query of step~(3) is repeatedly executed, in order to progress in the algorithm execution, we need to unify at least one $\langle \mathbf{a}, \mathbf{b} \rangle$ pair at a time, which can be stored in constant space (in one memory block). From an implementation point of view, there is a trade-off between the number of evaluations of the body of each EGD and the allocated memory for $\mathcal{G}$. In fact, multiple activations of the same EGD against the same database can be typically factored out in such a way that physical block reads are minimized, along the lines of usual query answering in database systems, where buffering is largely adopted. On the other hand, small available memory for $\mathcal{G}$ causes either more EGD evaluations or increased footprint to store intermediate results.

\section{Conclusion}
\label{sec:conclusion}
The aim of this work is pushing the boundaries of EGDs, fostering their adoption in a much broader context. We introduced a new fragment, namely harmless EGDs, that enables to fully use the unification power of equality dependencies, so far neglected in the existing fragments. In conjunction with decidable and efficient fragments for TGDs, like Warded Datalog$^\pm$, harmlessness shows not to hamper either decidability or complexity of query answering, while increasing the fragment expressive power. 
In a practical perspective, we studied an intuitive syntactic condition, which is sufficient to witness harmlessness. We paved the way for a practical implementation of the fragment by proposing a new chase variant, with a restriction condition based on the joint theoretical underpinnings of harmless EGDs and warded TGDs.

\smallskip Harmless EGDs open extensive future work. Subfragments of harmless EGDs turn out to have been already largely adopted, unknowingly, in a host of existing industrial scenarios as well as chase and data exchange benchmarks. Beyond this, we believe that harmless EGDs can be at the basis of a new generation of knowledge graph traversal algorithms, where problems requiring the expressive power of transitive closure are solved in a scalable way thanks to the unification power of this new fragment. 
Besides a full-fledged implementation in the {\sc Vadalog} systems, additional and interesting theoretical problems are still to be pursued. It is our conjecture, for example, that harmless EGDs do not hamper complexity of CQ answering, independently of the TGD fragment. We will report about these evolutions soon.

\bibliographystyle{plain}
\bibliography{b}
\end{document}